%% file: DENpaper.tex
\documentclass{article} 
\usepackage{iclr2025_conference,times}
\setcitestyle{numbers,square,sort&compress}

\input{math_commands.tex}

\usepackage{hyperref}
\usepackage{url}
\usepackage{booktabs}
\usepackage{pifont}       
\usepackage{multirow}
\usepackage{geometry}
\usepackage{graphicx}
\usepackage{threeparttable}
\usepackage{makecell}
\usepackage{pdflscape}
\usepackage{amsmath, amssymb}
\usepackage{amsthm}
\usepackage{mathtools}
\newtheorem{proposition}{Proposition}

\newtheorem{remark}{Remark}
\renewcommand{\headrulewidth}{0pt}

\title{
\noindent Dimension expansion for simulation-efficient nanophotonic neural networks}

\author{\hspace*{-0.4in}
Shuo Huang$^{1,2,*}$,
Mahsa Torfeh$^{3,*}$,
Lujia Zhong$^{1,3,*}$,
Michelle L. Povinelli$^{3,\dagger}$,
Owen D. Miller$^{4,\dagger}$,
Chia Wei Hsu$^{3,\ddagger}$ \\
\\
\hspace*{-0.4 in}
$^{1}$Stevens Neuroimaging and Informatics Institute, University of Southern California, Los Angeles, CA 90033, USA \\
\hspace*{-0.4 in}
$^{2}$Department of Biomedical Engineering, University of Southern California, Los Angeles, CA 90089, USA \\
\hspace*{-0.4 in}
$^{3}$Department of Electrical and Computer Engineering, University of Southern California, Los Angeles, CA 90089, USA \\
\hspace*{-0.4 in}
$^{4}$Department of Applied Physics, Yale University, West Haven, CT 06516, USA \\
\\
\hspace*{-0.4 in}
$^{*}$These authors contributed equally to this work. \\
\hspace*{-0.4 in}
$^{\dagger}$Corresponding authors:
\texttt{owen.miller@yale.edu},
\texttt{povinell@usc.edu} \\
\hspace*{-0.4 in}
$^{\ddagger}$Posthumous contribution
}

\iclrfinalcopy 
\begin{document}

\maketitle

\begin{abstract}
Inverse design of nanophotonic structures remains challenging due to the large design space, the nonlinear relationship between structure and optical response, and the computational cost of repeated optimization. Existing deep-learning-based approaches often rely on large pre-generated datasets or collections of optimized structures, limiting their scalability to complex inverse-design problems. Here, we introduce a dimension expansion network (DEN), a fully unsupervised framework for simulation-efficient nanophotonic inverse design. The central idea is to transform low-dimensional design objectives into structured high-dimensional conditioning representations before inverse design is performed, thereby reducing the dimensional mismatch between compact target parameters and high-dimensional nanophotonic structures. The framework is trained directly through differentiable electromagnetic simulations without requiring pre-generated training data. We demonstrate the approach on free-form metalens and asymmetric Y-splitter design tasks. For metalens design, DEN achieves focal intensities comparable to those obtained by adjoint optimization while reducing the number of required simulations by approximately 50\%, and scales from tens to thousands of target focal positions within the same focal region. For Y-splitter design, DEN accurately reproduces arbitrary power-splitting ratios using only 21 training targets and exhibits robust broadband performance. Ablation studies, representation analyses, and theoretical investigations show that dimension expansion improves target sensitivity, increases structural diversity, and reduces mode-collapse-like behavior. These results demonstrate that dimension expansion provides an effective strategy for conditioning inverse-design networks with low-dimensional design objectives and offers a promising approach for photonic inverse design over large continuous target spaces.
\end{abstract}
\textbf{Key words:} Nanophotonic inverse design, Dimension expansion, Metalens, Y-splitter, Deep learning

\section{Introduction}

Nanophotonics uses the interaction of light with subwavelength structures to precisely control electromagnetic waves at the nanoscale~\cite{2015Alu,Alu_Review_2017,Capasso_2015,Review2020Fan,lalanne_1999,2018Vuckovic,revifara,Zhelyeznyakov2021_ScattererField}, enabling a wide range of applications, including sensing~\cite{ma2020_sensors,2022_biosensor}, imaging and displays~\cite{2020_OLED,lee2018_ARVR}, and photonic integrated circuits~\cite{meng2021}. Designing nanophotonic structures with high efficiency and tolerant to fabrication imperfections remains a challenging task, due to the large number of design parameters and the highly nonlinear relationship between structure and optical response. Inverse design methods have been developed to address this challenge, enabling the discovery of high-performance and non-intuitive structures that surpass traditional design approaches~\cite{lalanne_1999,Capasso_2015,Review2020Fan,2018Vuckovic}. 
A variety of optimization-based techniques have been tried, including heuristic methods such as genetic algorithms and particle swarm optimization~\cite{mirjalili2013optical,genetic_2017concurrent,Genetic_jafar2018,lu2018particle,angeris2021heuristic,goudarzi2023inverse}, as well as gradient-based approaches leveraging the adjoint method (including ``topology optimization'')~\cite{Sigmund_2011topology,Yablanovic,ShanhuiFan_2018adjoint,Owen_2020,adjoint_mansouree2021,hammond2022high}. These methods require new iterative optimizations for every variation of a design problem, leading to substantial computational cost, particularly for structures with many degrees of freedom that are not amenable to parameter search methods. Moreover, they are sensitive to initialization and can be trapped in local optima, requiring multiple optimization runs per individual design to identify satisfactory solutions. Requiring binary materials (as opposed to common ``grayscale'' relaxations) and foundry compatibility can further compound these problems \cite{2018Vuckovic,Sigmund_2011topology,piggott2015inverse,sigmund2013topology,christiansen2021inverse,li2026multimode}.

Recently, deep neural networks (DNNs) have been explored for nanophotonic inverse design~\cite{2021_Muskens,khatib_2021_revw_GAN,ma2021deep,jiang2021deep,review2022,chen2022artificial}. Some approaches employ neural networks as surrogate models for ``forward'' electromagnetic simulations; ultrafast simulations then enable iterative optimization at low computational cost~\cite{Zhang2024_Vortex,mlpdnnatom,Zhelyeznyakov2021_ScattererField,An2022_CouplingCNN,Wang2025_DualLayer,transformeratomforw,Wang2025_KAN,mlpatomforward,mlpatomsixout,Peurifoy2018_NanoparticleANN,Tahersima2019_Splitter}. While effective for inverse design, training these models requires large, optimized datasets, often consisting of tens of thousands or even millions of full-wave simulations. Moreover, they do not address challenges such as local optima or repeated optimization for different design targets. Physics-informed neural networks (PINNs) \cite{raissi2019physics,karniadakis2021physics} and neural operators \cite{lu2019deeponet,li2020fourier,kovachki2023neural} have shown effectiveness in incorporating physical constraints into deep learning. However, their application to large-scale inverse design remains challenging. PINNs require repeated evaluation of differential operators over densely sampled domains, which leads to heavy differentiation calculation. Neural operators require extensive training over high-dimensional function spaces, which can be computationally expensive and difficult to be extended to complex tasks.  

An alternative is to use DNNs to directly perform inverse design, training networks to predict optimal structures directly from a desired target or objective, bypassing iterative optimization altogether. This approach can be challenged by the non-uniqueness of inverse problems, where multiple structures can produce similar optical responses, affecting the convergence of the training process~\cite{review2022,mlpgrating}. Several strategies have been proposed to avoid this issue, including filtering training datasets~\cite{kabir2008neural}, dimensionality reduction~\cite{Adibi_2020_npjC}, and restricting the design space~\cite{Gu2021_Bifocal,mlpatominver,dnnatom,Yan2025_MetasurfaceViT,Ou2025_FocusEngineering,cnnatom,mlpatomplas}. These approaches are typically restricted to low-dimensional problems, where they can be quite effective, but do not scale well to complex structures with many degrees of freedom. Tandem neural networks~\cite{Qiu2024_GSST,Ma2018_Chiral,thinmlp,Xu2021_TandemImproved} partially address the non-uniqueness issue by coupling inverse and forward models, but they still require costly training of forward surrogates and have difficulty capturing the complex relationships in high-dimensional design spaces. Generative models, such as generative adversarial networks (GANs)~\cite{goodfellow2014generat} and variational autoencoders (VAEs)~\cite{VAE_2013}, can generate diverse structures from random noise and can explore designs with larger parameter spaces~\cite{review2022,Xiong2024_Multiwavelength,Ma2019_ProbabilisticVAE,cdcganatom,An2021_MultifunctionalGAN,Mall2020_CyclicalDL,Xia2025_DeepLearningHybrid,So2019_cDCGAN,Jiang2019_MetagratingGAN,Wen2020_RobustGAN,ganpatter,Tang2020_IntegratedGAN,ganysplit}. Many such approaches rely heavily on pre-generated training datasets. In practice, constructing such datasets often requires numerous, computationally costly full-wave simulations~\cite{Jiang2019_MetagratingGAN,Wen2020_RobustGAN,ganpatter,Tang2020_IntegratedGAN,ganysplit}. Consequently, the overall computational cost can exceed that of traditional optimization methods. Furthermore, the generated designs tend to have a high degree of similarity to patterns in the training data and may fail to discover new structures with better performances than the structures in the training data.

Simulator-based generative optimization methods, such as GLOnets~\cite{jiang2020simulator}, avoid the need for pre-generated datasets by training directly through electromagnetic simulations. However, these methods are typically driven by random latent inputs and are designed to optimize a single target or a small number of discrete conditions at a time. The stochastic nature of these approaches produces a distribution of candidate structures with a wide range of performance metrics, often dominated by poor- to mediocre-performance designs that require simulation time but do not necessarily help identify the outlier structures desired for a design problem. Typically, post-selection among hundreds of generated structures for the same target is required to identify high-performance designs~\cite{jiang2020simulator}. Although simulator-based generative optimization can reduce the need for large datasets, it remains fundamentally a stochastic search process over a distribution of candidate structures.
In contrast, our objective is not to generate a distribution of candidate structures for a fixed target, but to learn a deterministic conditional mapping from continuously varying target specifications to high-performance designs. Rather than sampling many candidate structures and performing post-selection, the proposed framework directly predicts a design conditioned on the desired target response.
To summarize, existing DNN-based inverse-design methods either have limited ability to efficiently explore high-dimensional design spaces, rely on computationally costly datasets, or operate as stochastic generators that require repeated sampling and post-selection for each target specification. 

In this work, we introduce a fully unsupervised neural network approach with the computational parsimony of adjoint-based iterative optimization approaches, but which can simultaneously learn how to produce optimal structures across an entire class of desired designable devices. The key to our network is a technique we refer to as ``dimension expansion (DEN),'' depicted in Fig~\ref{Figure_propose_method}(a), in which scalar target objectives are lifted to a higher-dimensional space, using specific sinusoidal Fourier ``basis functions'' \cite{vaswani2017attention,tancik2020fourier}, before being fed to attention-based convolutional neural networks. The networks produce \emph{many} designs, one for each scalar target, and back-propagation through the network can improve these designs in parallel, essentially as rapidly as iterative optimization improves individual designs. In a prototypical metalens design application, even with only 1500 \emph{simulations}, this approach can find 25 binarized high-performance designs with variable, independent (sufficiently well-separated) focal points over a $1.6 \lambda \times 1.6 \lambda$ target region. With full-space sampling when all possible focal points are included during training, the network further enables efficient generation of high-performing designs across the entire target space of focal points (6561 designs in $5.33 \lambda \times 5.33 \lambda$ region using only 8300 simulations). Through systematic tests, we show the importance of incorporating dimension expansion, without which the performance of the trained networks collapses, and the field intensity at the target focal point is reduced by about 60\%. We demonstrate the effectiveness of the proposed approach on challenging design tasks, including free-form metalenses and asymmetric Y-splitters, achieving high performance and substantial computational savings, while demonstrating generalization to unseen target conditions in the Y-splitter design task. This work addresses an important challenge in nanophotonic inverse design by improving the conditioning of inverse-design networks with low-dimensional design objectives. The proposed DEN framework demonstrates that dimension expansion can enable efficient learning of high-performance designs across large continuous target spaces. 
Our code is publicly available at: \url{https://github.com/huangshuo343/dimension_expansion_network}.

\begin{figure}[bt]
    \centering
    \includegraphics[width=\textwidth]{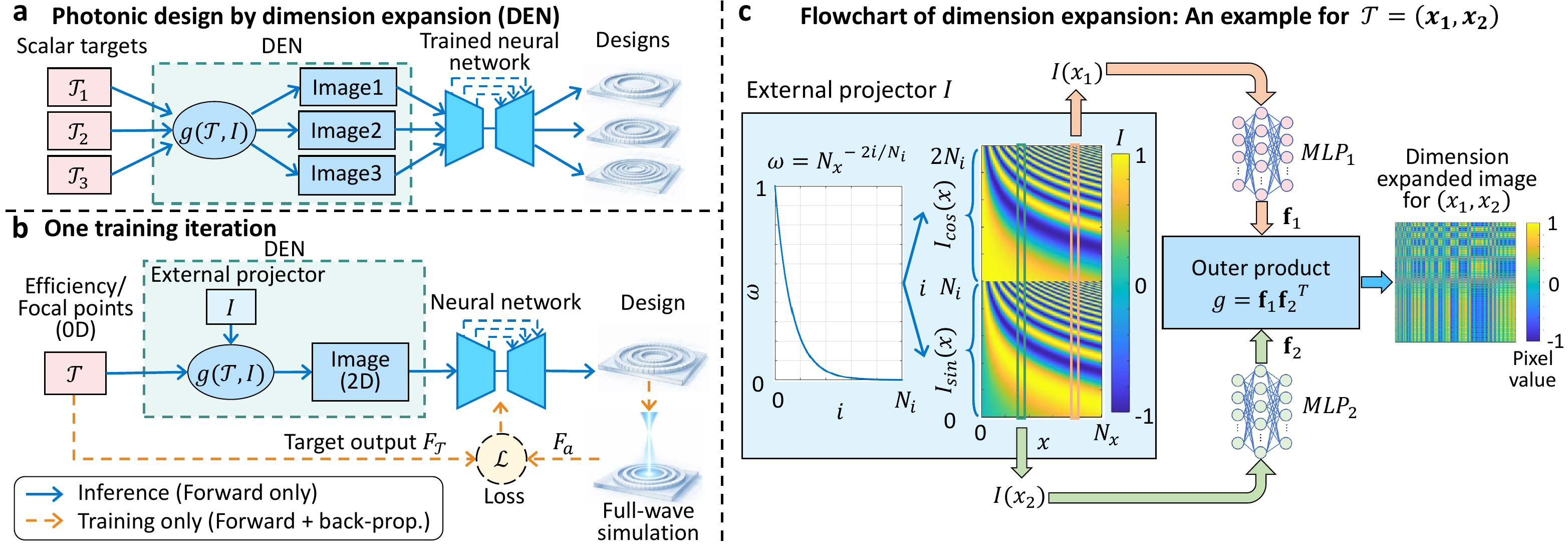}
    \caption{\textbf{The proposed dimension expansion network (DEN).} \textbf{(a)} Structure generation using the trained network under multiple design targets. \textbf{(b)} Training process of the network. The scalar target, such as efficiency or focal-point coordinates, is first embedded into a high-dimensional image and then fed into the neural network. The loss is calculated by directly comparing the actual output of the designed structure, $F_a$, with the target output $F_\mathcal{T}$. \textbf{(c)} Flowchart of dimension expansion, illustrated using the example $\mathcal{T}=(x_1,x_2)$.}
    \label{Figure_propose_method}
\end{figure}

\section{Approach}
In this work, we propose a fully unsupervised deep-learning network for photonic inverse design. As shown in Fig.~\ref{Figure_propose_method}, the proposed DEN contains two parts: 1) Dimension expansion and 2) Inverse design with fully unsupervised neural network. We describe each part in the following sections.


\subsection{Dimension expansion}
\label{sec::dimenexpan}

The goal of our network is to take 0D scalar targets (e.g., focal point coordinates of a metalens or power splitting ratios of Y-splitters) as input and generate the corresponding nanophotonic designs, where the refractive index profile of the design space is the output. 
However, the there is a significant mismatch between the low-dimensional scalar inputs (a few real numbers) and the high-dimensional design-variable output ($10^3$ or more design parameters). 
From a functional perspective, this corresponds to learning a mapping from a very low-dimensional space to a high-dimensional space using a neural network with millions of parameters (often exceeding $10^7$), which is known to be challenging and can lead to poor generalization or training instability~\cite{dimensmismat}.

Increasing the dimensionality of feature representations is a common strategy in neural networks to make complex mappings easier to learn~\cite{reprerevi}. 
For example, convolutional neural networks expand inputs using multiple filters~\cite{convneur,imane}, while multilayer perceptrons and transformer-based models map low-dimensional inputs into higher-dimensional embeddings~\cite{vaswani2017attention,tancik2020fourier}. 
However, these approaches usually assume that the input already has some form of structure, such as spatial grids in images or sequences in tokens. But scalar input targets do not contain such structure. As a result, these standard strategies cannot be directly applied at the input stage. 
Therefore, we need a different way to expand the input dimension, one that can construct structured, high-dimensional representations directly from scalar inputs.
A natural approach is to enrich the input by introducing structured representations that provide more informative features for learning. 
In a preliminary version of this work, we used the electric field obtained by using the target response as the source and propagating it through a predefined initial structure as the input representation~\cite{denwasim}. 
This approach is based on the observation that the electric field provides a spatially structured representation that is informative about the desired structure, which facilitates learning the mapping from targets to designs.
This method becomes less practical as the number of targets increases, since obtaining the electric field requires performing a full-wave simulation for each target. 
In addition, the resulting electric field depends on the choice of the propagation setup, which may reduce its generality as an input representation.

Instead, we transform the scalar targets into a spatially structured representation that can serve as the network input. 
We expand the scalar targets in two steps: (1) mapping each value to a 1D representation using an external projector, and (2) building a 2D representation via an outer product. For the external projector, we use sinusoidal positional encoding \cite{vaswani2017attention}; for a scalar target $x$, the encoding is
\begin{equation}
I(x) = \big[ \sin(\omega_i x), \cos(\omega_i x) \big]_{i=1}^{N_i} \in \mathbb{R}^{2N_i},
\end{equation}
where $N_i$ is the number of ``frequencies'' (not related to the electromagnetic frequencies in the design problem), which themselves are chosen as
\begin{equation}
\omega_i = N_x^{-2i/N_i}, \quad i = 1, \dots, N_i,
\end{equation}
with $N_x$ being the number of discrete samples of the scalar targets (e.g., the number of target focal points for the metalens along one dimension). For targets sampled from a two-dimensional space, such as focal points in the example below, we use $\mathcal{T} = (x_1, x_2)$ to denote its coordinates, with each component encoded independently.

This mapping can be viewed as embedding the scalar inputs into a Fourier feature space, where the target information is distributed across multiple frequency bands using sine and cosine functions. 
Fourier-based encodings have been widely used in neural networks to improve the learning of complex functions from low-dimensional inputs~\cite{tancik2020fourier}.
In this way, the scalar inputs are expanded into a set of multi-frequency basis functions, allowing the network to represent variations at different scales. 
This is particularly suitable for our problem, where the mapping from target parameters to optimal structures is often continuous and locally smooth, meaning that small changes in the targets lead to gradual changes in the structure, while still allowing sharper structural variations when needed. 
While Fourier feature encodings have been widely used in coordinate-based representations and implicit neural models to improve the learning of high-frequency functions from low-dimensional inputs~\cite{tancik2020fourier,mildenhall2021nerf}, and sinusoidal encodings are also commonly used as conditioning mechanisms in generative models, we are not aware of any research using Fourier feature encodings for nanophotonic neural networks taking design specifications as inputs.. 
In this work, we adopt this technique to construct structured input representations for inverse photonic design without relying on training base of electromagnetic simulations.

The encoded representations then pass through two multilayer perceptrons (MLPs),
\begin{equation}
\mathbf{f}_1 = \mathrm{MLP}_1\!\left[I(x_1)\right], \quad
\mathbf{f}_2 = \mathrm{MLP}_2\!\left[I(x_2)\right],
\end{equation}
where each MLP consists of two fully connected layers with Sigmoid Linear Unit (SiLU)~\cite{sigline} activations. These MLPs further process the encoded features and adapt them to the inverse design task by learning task-dependent combinations of the multi-frequency components. 
In this way, the network can emphasize the most relevant frequency components for mapping targets to structures, leading to a representation that better captures the underlying target-structure relationship.

To further expand the dimensionality, we construct a 2D feature map from the 1D feature vectors via the outer product
\begin{equation}
g = \mathbf{f}_1 \mathbf{f}_2^{\top} \in \mathbb{R}^{2N_i \times 2N_i}.
\end{equation}
Each matrix element of $g$ is the unique product of one element from $\mathbf{f}_1$ with one element from $\mathbf{f}_2$, 
capturing interactions between all pairs of encoded features and lifting the two 1D representations into a structured 2D feature map. The 2D image $g$ then feeds as the input into the NN (Fig.~\ref{Figure_propose_method}(c)).

\paragraph{Theoretical motivation.}
The proposed dimension expansion serves two complementary purposes. First, the Fourier encoding maps low-dimensional target coordinates into a multi-scale feature space, where nearby targets become more distinguishable through a frequency-dependent similarity structure. Second, the outer-product construction lifts the coordinate embeddings into a higher-dimensional bilinear representation. As discussed in Appendix~\ref{sec::matheanaly}, this lifting expands the linear span of the conditioning representations and reduces pairwise correlations between different target conditions. Together, these effects improve target separability and help preserve target-dependent information throughout training, thereby reducing mode-collapse-like behavior.



\subsection{Fully unsupervised neural network}
\label{sec_net}

To map the dimension-expanded input representations to physical designs, we employ an attention-based convolutional neural network that generates the refractive index profile $n(x,y)$. 
The network follows an encoder-decoder architecture~\cite{unet}, where convolutional residual blocks~\cite{resn} extract multi-scale spatial features, and attention modules~\cite{vaswani2017attention} capture long-range dependencies. 
In the encoder, the input is progressively processed through down-sampling convolutional blocks to learn hierarchical representations, followed by a bottleneck stage that combines convolution and attention operations. 
The decoder restores spatial resolution through up-sampling, while skip connections fuse encoder features to preserve fine structural details. 
The final output head maps the decoded features to the refractive index profile $n(x,y)$.

To accurately model light-matter interactions during training, we perform full-wave simulations by solving Maxwell's equations for TM polarization (scalar electric field) using a finite-difference frequency-domain (FDFD) solver. 
The design region is discretized using finite differences, while the surrounding homogeneous region is modeled using a Fourier-based propagation method to efficiently capture wave propagation in open space:
\begin{equation}
\left[-\frac{\partial^2}{\partial x^2} - \frac{\partial^2}{\partial y^2} - \frac{\omega^2}{c^2}n^2(x,y)\right]E_z(x,y) = i \omega \mu_0 J_z(x,y).
\label{equation1}
\end{equation}
Here, $\omega$ is the angular frequency, $c$ is the speed of light in vacuum, $\mu_0$ is the vacuum permeability, $n(x,y)$ is the refractive index distribution, $E_z(x,y)$ is the electric field, and $J_z(x,y)$ is an effective source term. 
Further details of the numerical formulation are provided in Sec.~\ref{sec_met}.

We consider two representative examples: asymmetrical Y-splitters and metalenses. 
For the Y-splitter, the device divides incoming light into two output waveguides with target transmission coefficients $T_1$ and $T_2$, while for the metalens, the device focuses light to a target focal point $(f_x, f_y)$. 
The corresponding loss functions are defined as follows. 
For the Y-splitter design, directly minimizing $|T_1 - T_{1\_a}|$ and $|T_2 - T_{2\_a}|$ leads to a highly nonconvex optimization landscape; to improve convergence, we additionally encourage high total transmission:
\begin{equation}
\mathcal{L} = |T_1 - T_{1\_a}| + |T_2 - T_{2\_a}| + \big|1 - (T_{1\_a} + T_{2\_a})\big|.
\end{equation}
For the metalens design, letting $E(x,y)$ denote the simulated electric field, we define
\begin{equation}
\mathcal{L} = -\, |E(f_x, f_y)|^2,
\end{equation}
so that minimizing the loss corresponds to maximizing the focal intensity.

The treatment of targets differs between the two tasks. 
For the Y-splitter, the mapping from target parameters to structures is relatively smooth, allowing the model to be trained on a subset of targets and to generalize to unseen ones within the continuous target space. 
For the metalens design, although the target space is also continuous, the corresponding mapping is significantly more complex and exhibits higher-frequency variations in the target-to-structure mapping. 
As a result, reliable interpolation becomes more challenging, and we include a dense set of target points during training to ensure adequate coverage of the design space.
Once trained, the network generates designs directly from input targets without requiring iterative optimization, learning a shared mapping instead of solving each design problem independently.
For continuous target spaces such as the Y-splitter, this enables interpolation between trained samples and generalization to unseen inputs, while for the metalens, the learned mapping efficiently covers a densely sampled set of targets.

This results in an amortized optimization framework~\cite{VAE_2013,amortinf}, where a single trained model replaces repeated per-target optimization processes. Rather than independently optimizing a separate structure for every target condition, DEN learns a shared target-conditioned mapping across the entire design space during training. Consequently, information about multiple design--target pairs can be learned simultaneously, within a single training process. For example, the proposed framework learns metalens designs covering an entire $5.33\lambda \times 5.33\lambda$ focal region using only 8300 training simulations. For comparison with adjoint-based inverse design, we first only consider 121 independent focal points in this region (corresponding to approximately half-wavelength spacing between focal points in both dimensions), and show high-efficiency designs across all 121 with only 8300 training simulations. Moreover, as shown in Appendix Sec.~\ref{sec::pixeldes}, our framework is nearly as efficient with dense target-space distributions, performing nearly as well for 6561 focal positions in the same region with the same 8300 training simulations. Once training is completed, new designs can be generated directly from their target conditions without requiring additional per-target optimization. These results demonstrate that the proposed approach can efficiently learn and represent large collections of target-dependent designs within a single optimization process.

\section{Results}
In this section, we evaluate the proposed dimension expansion network (DEN) on free-form metalens and asymmetric Y-splitter inverse-design tasks. 
We first demonstrate that DEN can efficiently generate many high-performance metalens designs with different target focal points while significantly reducing the number of required simulations compared with adjoint optimization. 
We then investigate the robustness of the proposed dimension expansion using different network backbones. 
Finally, we evaluate DEN on Y-splitter design tasks with arbitrary power splitting ratios and broadband optical performance.

\begin{figure}[bt]
    \centering
    \includegraphics[width=0.6\textwidth]{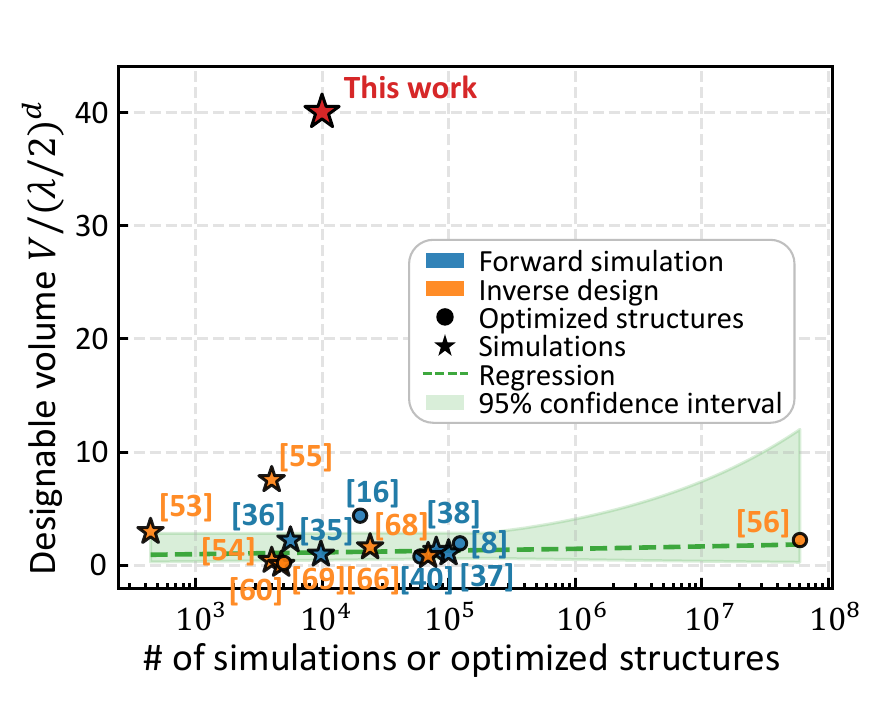}
    \caption{\textbf{Comparison with existing neural-network-based metalens design methods.} We compare existing work and our metalens design outcomes on two axes: (1) the design region's size, $V$, measured in half-wavelength boxes/cubes ($d$ is the simulation dimension): $V / (\lambda/2)^d$; (2) the number of optimized structures or simulations used to generate training data and train the network. ``Forward simulation'' are methods that only use deep learning to predict the forward simulation results of given structures, and ``Inverse design'' are methods that directly output the designed structures. ``Optimized structures'' refer to datasets in which each training structure is produced through optimization, typically requiring many more simulations than the number of training samples, whereas ``simulations'' denote datasets generated by random sampling with only one simulation per data point. The regression line (for correlation of the form $y = a x^{b}$) and the 95\% confidence interval are included.}
    \label{Figure_compare_exist}
\end{figure}

\subsection{Training Details}
\label{sec::tradetai}
We now proceed to train the proposed DEN. All experiments are performed on a single NVIDIA A6000 GPU, with a peak memory consumption of approximately 20~GB during training. The model is optimized end-to-end using differentiable full-wave electromagnetic simulations and does not require any pre-generated training dataset or collection of optimized structures. Instead, the network directly learns from electromagnetic gradients during optimization.

For the metalens design task, we employ a coarse-to-fine target-sampling strategy to improve optimization stability and accelerate convergence. At each training iteration, a target focal position is randomly selected from the target region. During the early stages of training, the focal positions are sampled using a relatively coarse spatial interval. Specifically, for a target focal point region with side length $D$ (shown in Fig.~\ref{Figure_compare_adjoi}(a)), the initial sampling interval is set to $0.5 D$, corresponding to a $3\times3$ focal-point grid within the target region. The interval is subsequently reduced to $0.25D$, corresponding to a $5\times5$ grid, and is further refined during training until the spacing reaches approximately half a wavelength.
This curriculum-like strategy enables the network to first learn coarse target-to-structure relationships before progressively refining the mapping over a denser target distribution. Starting from sparse target sampling also reduces the optimization difficulty during the early stages of training, allowing the network to establish a stable global mapping before resolving finer target-dependent variations. Additional experiments investigating dense target-space coverage are provided in Appendix Sec.~\ref{sec::pixeldes}.

During training, the target coordinates are first processed by the dimension-expansion module and subsequently passed through the neural-network backbone to generate a refractive-index distribution. The generated structure is then transformed into a differentiable binary-like representation using the binarization procedures described in Appendix Sec.~\ref{sec::binar}. The resulting refractive-index profile is embedded into the simulation domain and evaluated using the full-wave solver described in Sec.~\ref{sec_met}. The objective is to maximize the focal intensity at the target position while simultaneously encouraging fabrication-compatible binary structures.
The network parameters are optimized using AdamW with an initial learning rate of $5\times10^{-5}$. Gradients are obtained by back-propagating the electromagnetic loss through the entire simulation and network pipeline. Throughout training, the strength of the binarization constraint is gradually increased, allowing the network to first explore a continuous design space and subsequently converge toward discrete fabrication-compatible structures.

The final DEN checkpoint is selected such that the generated structures achieve binarization degrees comparable to those obtained by direct inverse design using L-BFGS~\cite{lbfgsrevi}. The binarization degree quantifies the extent to which the refractive-index distribution approaches an ideal binary structure, with a value of 1 corresponding to a perfectly binary design and a value of 0 corresponding to a completely intermediate-valued distribution. The formal definition is provided in Appendix Sec.~\ref{sec::binar}. Because the metalens configuration is mirror symmetric with respect to the optical axis, each off-axis focal position has a symmetry-equivalent counterpart. Throughout this work, the two symmetry-related focal positions are evaluated for each method (DEN and inverse design), and the higher focal intensity is reported. This symmetry-aware evaluation protocol is applied consistently to all numerical metalens experiments, including Sec.~\ref{sec_res_met} and Appendix Secs.~\ref{sec::examp} and \ref{sec::pixeldes}.

\begin{figure}[bt]
    \centering
    \includegraphics[width=\textwidth]{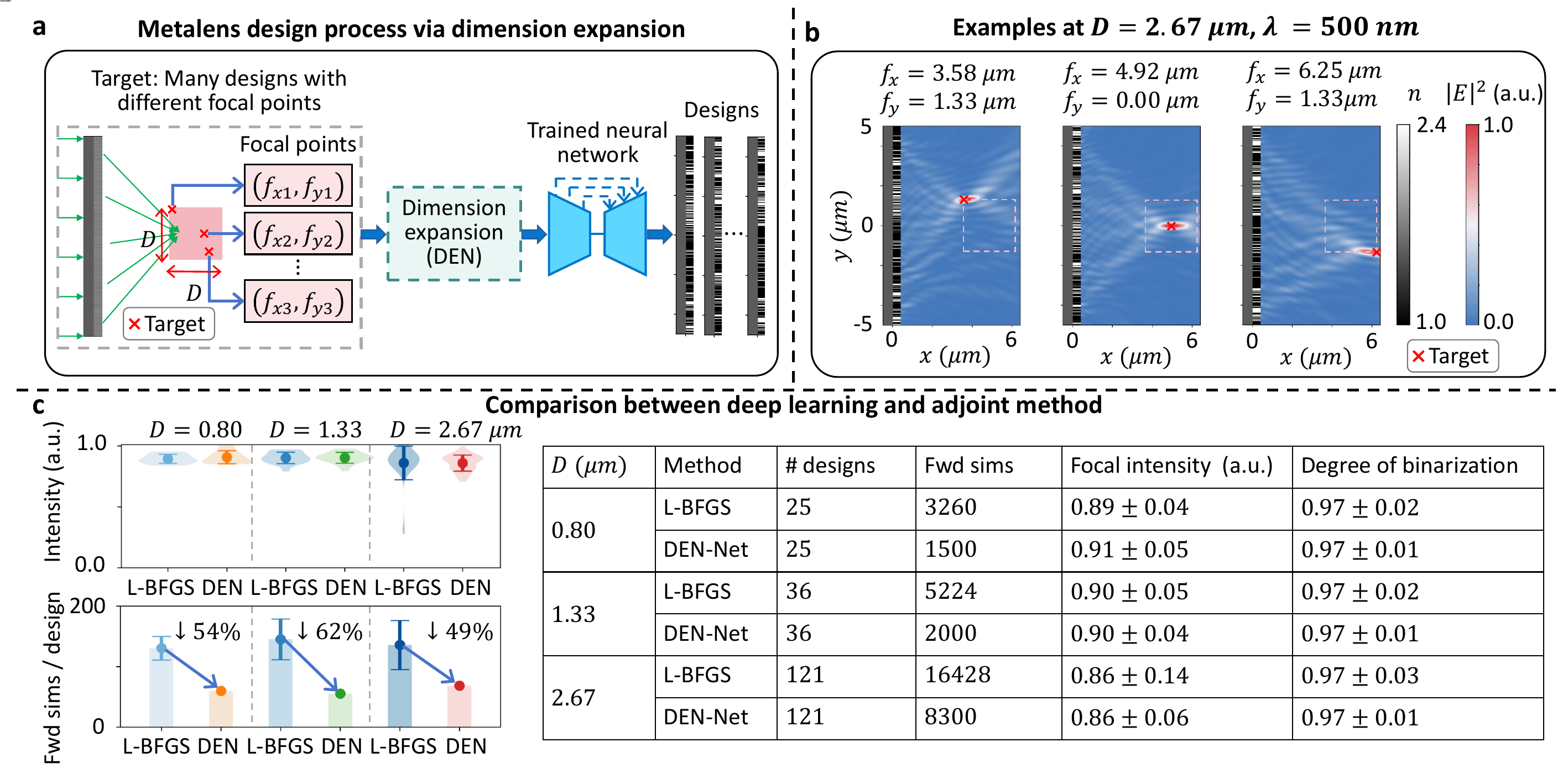}
    \caption{\textbf{Comparison between DEN and adjoint optimization (via L-BFGS) for metalens design.} \textbf{(a)} Schematic illustration of the metalens design task. \textbf{(b)} Three design examples with different target focal-point coordinates and their corresponding output fields. \textbf{(c)} Comparison between DEN and L-BFGS in terms of focal intensity at the full width at half maximum (FWHM) and the number of simulations required to generate the designs. Results are shown for small ($D = 0.80\,\mu\mathrm{m}$), medium ($D = 1.33\,\mu\mathrm{m}$), and large ($D = 2.67\,\mu\mathrm{m}$) focal-point regions.}
    \label{Figure_compare_adjoi}
\end{figure}

\subsection{DEN performance for metalens focusing}
\label{sec_res_met}
We first evaluate the proposed DEN on the inverse design of free-form metalenses. The objective is to generate compact nanophotonic structures that focus incident light to arbitrary target positions within a predefined focal region. As illustrated in Fig.~\ref{Figure_compare_adjoi}(a), the target focal coordinates $(f_x,f_y)$ are transformed through the proposed dimension-expansion module and subsequently mapped to a binary refractive-index distribution by the neural network. The problem is formulated in two dimensions under TM polarization at a wavelength of $\lambda=500~\mathrm{nm}$. The refractive-index distribution is binary with $n\in\{1.0,\,2.4\}$ and is discretized using a spatial resolution of $\Delta x=\lambda/15\approx33.3~\mathrm{nm}$, which also defines the minimum feature size of the design. The metalens design region consists of $12\times301$ pixels, corresponding to a physical size of approximately $0.40~\mu\mathrm{m}\times10.0~\mu\mathrm{m}$ in the propagation ($x$) and transverse ($y$) directions, respectively (Fig.~\ref{Figure_compare_adjoi}(b)). The refractive-index distribution is uniform along the propagation direction, leaving 301 independent design variables along the transverse direction. For the largest target region considered in this work, the focal-point coordinates span a square region with side length $D=2.67~\mu\mathrm{m}$, where the nearest and farthest target positions are located at $f_x=3.58~\mu\mathrm{m}$ and $f_x=6.25~\mu\mathrm{m}$ from the metalens, respectively. These focal distances correspond to effective numerical apertures ranging from approximately $0.81$ to $0.63$. The corresponding off-axis focusing angles span approximately $12.0^\circ$ to $20.4^\circ$ (or $\pm20.4^\circ$ when both positive and negative transverse displacements are considered). Smaller focal regions with side lengths of $D=0.80~\mu\mathrm{m}$ and $D=1.33~\mu\mathrm{m}$ are also considered. All focal regions share the same center, located at $(f_x,f_y)=(4.92~\mu\mathrm{m},\,0.00~\mu\mathrm{m})$. Unlike conventional adjoint optimization, which independently solves a separate optimization problem for every focal point, DEN learns a shared mapping between focal coordinates and optimized structures, enabling many designs to be generated from a single trained model.

A key advantage of DEN is that the optimization cost is amortized across the entire target space. Instead of repeatedly optimizing each focal point independently, the network progressively learns a global target-conditioned mapping between focal coordinates and high-performance structures. Once training is completed, new designs can be generated through a single forward pass without requiring any additional optimization. As a result, the computational cost grows much more slowly than the number of target designs.
To place this efficiency in a broader context, we compare DEN with existing deep-learning-based metalens design approaches in Fig.~\ref{Figure_compare_exist}. The horizontal axis represents the number of simulations or optimized structures required for training, while the vertical axis denotes the normalized designable volume. Existing approaches exhibit a clear empirical scaling trend in which larger design spaces generally require substantially larger simulation datasets or collections of optimized structures. Forward-simulation surrogate models often require tens of thousands to millions of simulations, whereas inverse-design methods typically depend on large datasets of pre-optimized structures.
In contrast, DEN directly performs physics-based inverse design without requiring any pre-generated datasets. As shown in Fig.~\ref{Figure_compare_exist}, the proposed method occupies a distinct region well above the regression trend established by previous approaches. The metalens benchmark considered in this work achieves a normalized designable volume of approximately $40$ using only about $8\times10^3$ simulations, whereas the full-wave volumes designed to date with neural networks methods are smaller than $10$, despite requiring comparable or substantially larger datasets. Additional analyses are provided in the Appendix. Appendix Sec.~\ref{sec::appcomexis} further compares DEN with representative deep-learning-based nanophotonic design methods. Compared with previous approaches, the proposed metalens benchmark achieves one of the largest reported normalized designable volumes ($\approx 40$) while maintaining a comparatively large minimum fabrication feature size ($\approx \lambda/15$) and requiring only about $8\times10^3$ simulations (Fig.~\ref{Figure_compare_exist}, Appendix Sec.~\ref{sec::appcomexis}). The resulting performance lies well above the empirical scaling trend established by existing methods. 

Representative examples are shown in Fig.~\ref{Figure_compare_adjoi}(b) for the largest focal region considered in this work ($D=2.67~\mu\mathrm{m}$). Despite the substantial variation in target focal positions, the generated structures successfully concentrate optical power at the desired locations and produce well-localized focal spots with low background scattering. The focal spots closely coincide with the target coordinates, demonstrating that DEN can reliably generate target-dependent designs across a broad focal region.

We next compare DEN with conventional adjoint optimization using the L-BFGS algorithm. Three focal-region sizes are considered: $D=0.80~\mu\mathrm{m}$, $1.33~\mu\mathrm{m}$, and $2.67~\mu\mathrm{m}$. For each target focal point, L-BFGS independently optimizes a separate structure, whereas DEN jointly learns all target-dependent designs through a single training process.
The quantitative comparison is summarized in Fig.~\ref{Figure_compare_adjoi}(c). Across all focal-region sizes, DEN achieves focal intensities essentially equivalent to those obtained by L-BFGS while requiring substantially fewer full-wave simulations. For the smallest focal region ($D=0.80~\mu\mathrm{m}$), DEN generates 25 designs using only 1500 simulations, corresponding to a $54\%$ reduction relative to the 3260 simulations required by L-BFGS. For $D=1.33~\mu\mathrm{m}$, the simulation count is reduced from 5224 to 2000 ($62\%$ reduction). For the largest focal region ($D=2.67~\mu\mathrm{m}$), DEN generates 121 designs using 8300 simulations, compared with 16428 simulations for L-BFGS, corresponding to a $49\%$ reduction in computational cost.
Importantly, the reduction in simulation cost does not significantly compromise optical performance. DEN achieves average focal intensities of $0.91 \pm 0.05$, $0.90 \pm 0.04$, and $0.86 \pm 0.06$ a.u. for $D=0.80~\mu\mathrm{m}$, $1.33~\mu\mathrm{m}$, and $2.67~\mu\mathrm{m}$, respectively, which are comparable to those obtained by L-BFGS. The generated structures also exhibit high binarization levels of approximately $0.97$, indicating that the differentiable binarization strategy successfully produces fabrication-compatible designs.

Appendix Sec.~\ref{sec::examp} presents a larger collection of metalens designs generated by a single trained network across the focal region, illustrating that DEN can produce diverse target-dependent structures while maintaining consistent focusing performance. Furthermore, Appendix Sec.~\ref{sec::pixeldes} investigates dense target-space sampling, where the same focal region is sampled at substantially higher spatial density. Even when the number of target focal positions increases from $121$ to $6561$, corresponding to dense per-pixel coverage of the same continuous target space, the average focal intensity decreases by only approximately $6\%$. Overall, these results demonstrate that DEN can efficiently learn a target-conditioned mapping between focal coordinates and nanophotonic structures, enabling scalable inverse design across large target spaces while substantially reducing the computational cost relative to conventional optimization-based approaches.

\begin{figure}[bt]
    \centering
    \includegraphics[width=\textwidth]{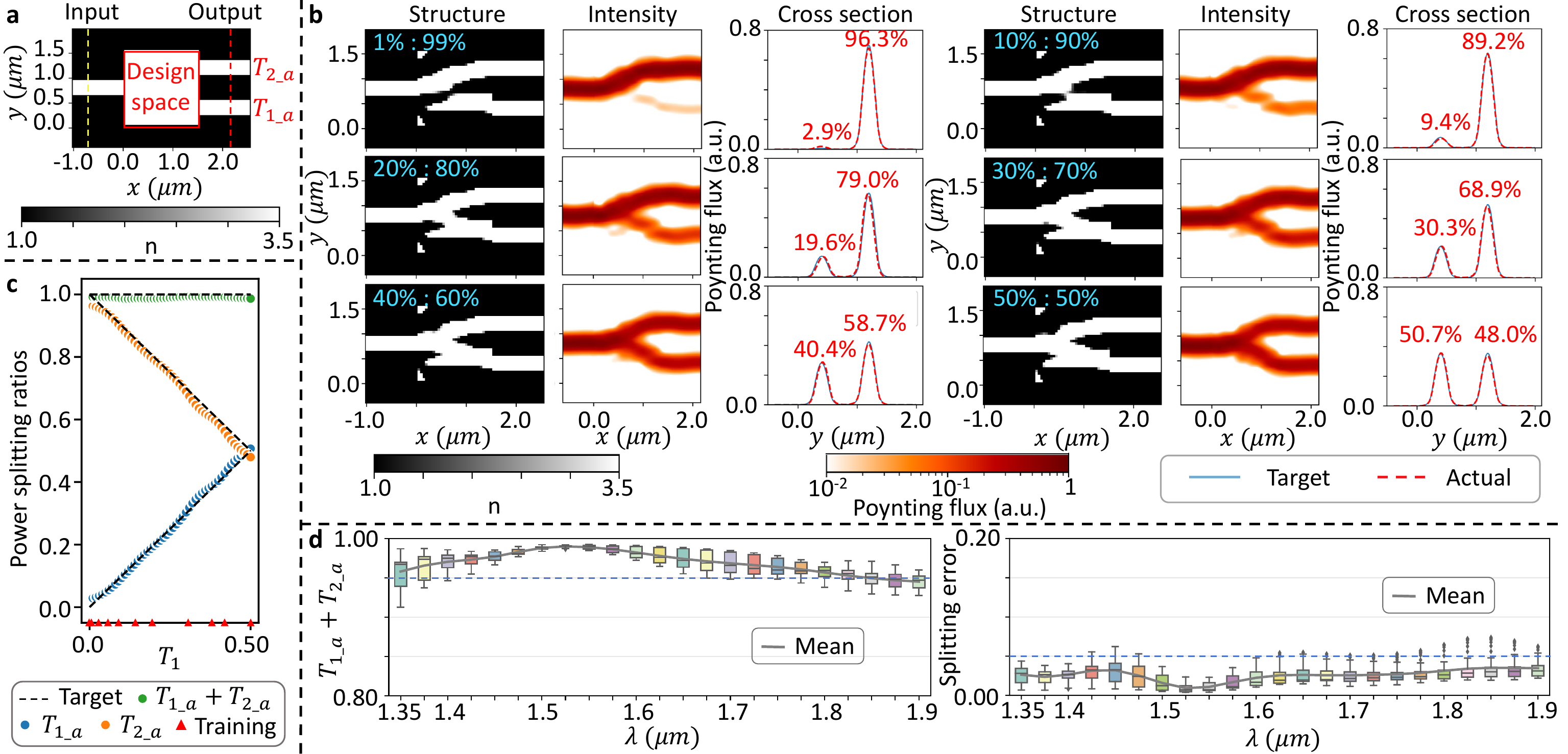}
    \caption{\textbf{Inverse design of Y-splitters using DEN. The model is trained at $\lambda = 1550\,\mathrm{nm}$.} \textbf{(a)} Design configuration. The design region consists of a $30 \times 30 = 900$ pixel square. Light is injected at the yellow line and measured at the blue line. \textbf{(b)} Six representative test examples with different target splitting ratios not used during training. The corresponding Poynting flux distributions and output cross-sectional Poynting flux profiles are also shown. \textbf{(c)} Target and actual splitting ratios, together with total transmission ratios, for test structures with target power-splitting ratios ranging from $1\%:99\%$ to $50\%:50\%$ at $\lambda = 1.55\,\mu\mathrm{m}$. Training targets are marked by red crosses. \textbf{(d)} Broadband performance of structures designed at $\lambda = 1.55\,\mu\mathrm{m}$ over the wavelength range from $\lambda = 1.35\,\mu\mathrm{m}$ to $\lambda = 1.9\,\mu\mathrm{m}$. The splitting error is defined as $(|T_{1\_a}-T_1| + |T_{2\_a}-T_2|)/2$.}
    \label{Figure_yspl_results}
\end{figure}

\subsection{Performances on Y-splitters}
\label{sec::yspli}
We further evaluate the proposed DEN for inverse design of asymmetric Y-splitters. The objective is to generate compact photonic structures that split input optical power into two output waveguides with arbitrary target transmission ratios. As illustrated in Fig.~\ref{Figure_yspl_results}(a), the design region consists of a $30\times30$ binary refractive-index distribution, corresponding to 900 design parameters. Light is injected from the input waveguide on the left side, and the transmitted power is measured at the two output ports on the right side. The network is trained at the target wavelength of $\lambda = 1550~\mathrm{nm}$ using only 21 carefully selected training splitting ratios. The target-selection strategy is described in Appendix Sec.~\ref{sec::tradatyspl}, where the training samples are intentionally concentrated near the extreme splitting conditions to provide improved coverage of the most rapidly varying regions of the design space. During training, the model is optimized end-to-end using full-wave electromagnetic simulations without requiring any pre-generated optimized structures.

Representative examples generated by DEN are shown in Fig.~\ref{Figure_yspl_results}(b). The pictured structures correspond to target splitting ratios ranging from strongly asymmetric ($1\% : 99\%$) to symmetric ($50\% : 50\%$) conditions. Despite the large variation in target power distributions, the generated devices efficiently guide optical power into the desired output ports while maintaining smooth propagation through the splitter region. The corresponding Poynting-flux distributions exhibit limited back reflection and scattering, while the measured output power profiles closely match the prescribed target ratios. These examples demonstrate that the network successfully adapts the device geometry to a wide range of splitting requirements.

A quantitative comparison between the target and actual splitting ratios is presented in Fig.~\ref{Figure_yspl_results}(c). Despite being trained on only 21 target conditions, DEN accurately reproduces splitting ratios across the entire continuous design space. The actual splitting ratios closely overlap the target curve throughout the range from $1\% : 99\%$ to $50\% : 50\%$, including many ratios that are not explicitly included in the training set. At the same time, the total transmission remains close to unity for nearly all generated devices, indicating that the network not only learns the desired power distribution but also preserves high transmission efficiency. These results demonstrate strong interpolation capability between sparsely sampled training targets and suggest that DEN learns a continuous target-conditioned mapping rather than memorizing a discrete set of training examples.

We further evaluate the broadband performance of the generated structures in Fig.~\ref{Figure_yspl_results}(d). Although the network is trained only at $\lambda = 1550~\mathrm{nm}$, the designed Y-splitters maintain high transmission efficiency over a broad wavelength range from $\lambda = 1350~\mathrm{nm}$ to $\lambda = 1900~\mathrm{nm}$. The average total transmission remains above approximately $95\%$ throughout most of the wavelength range and reaches nearly unity around the design wavelength. Meanwhile, the splitting error remains low ($<5\%$) and relatively stable across the spectrum. This behavior indicates that the generated structures are not narrowly optimized for a single operating wavelength but instead exhibit robust broadband performance.

Overall, these results demonstrate that DEN can efficiently generate high-performance Y-splitter designs across a continuous range of splitting ratios using only a small number of training targets and simulations. The network achieves accurate ratio control, high transmission efficiency, strong interpolation capability, and robust broadband performance while avoiding the per-target optimization procedures required by conventional inverse-design approaches.

\begin{figure}[bt]
    \centering
    \includegraphics[width=\textwidth]{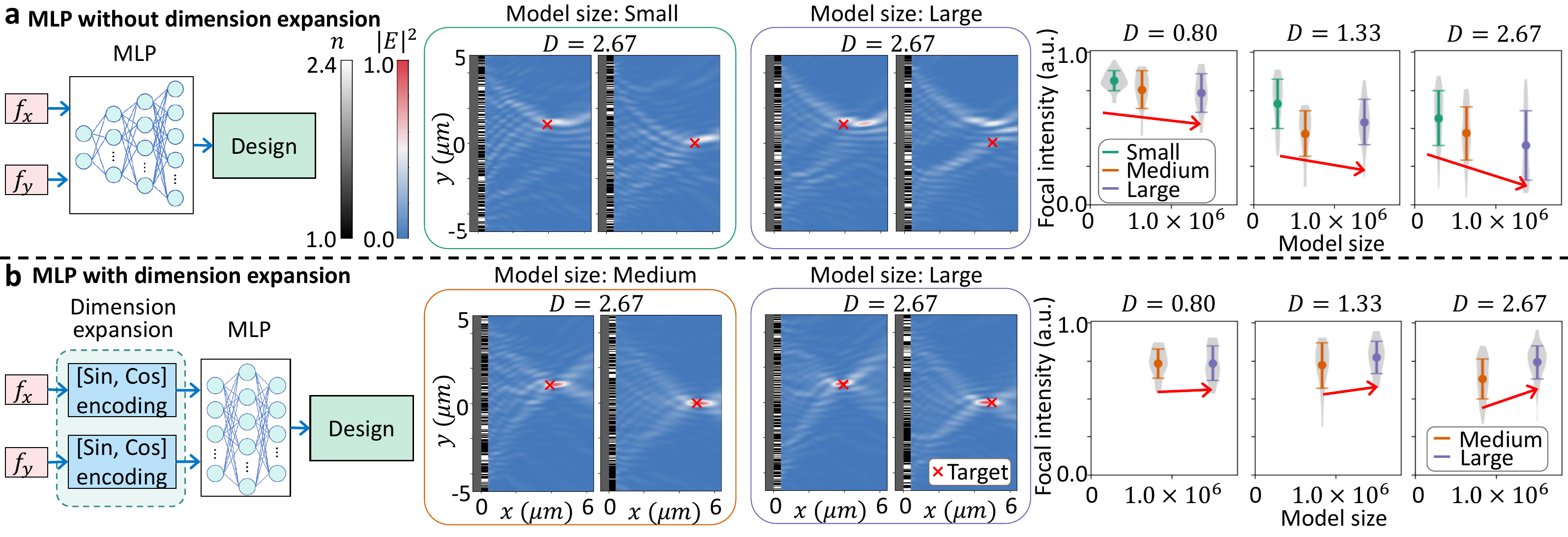}
    \caption{\textbf{Dimension expansion enables effective inverse design even with a simple multilayer perceptron (MLP) backbone.} \textbf{(a)} Performance of MLP-based models without dimension expansion. Performance degrades as the number of model parameters increases or as the focal region becomes larger. \textbf{(b)} Performance of MLP-based models with dimension expansion. The generated structures exhibit improved focusing performance, and the performance further improves as the model size increases.}
    \label{Figure_compare_mlpuseembed}
\end{figure}

\begin{figure}[bt]
    \centering
    \includegraphics[width=\textwidth]{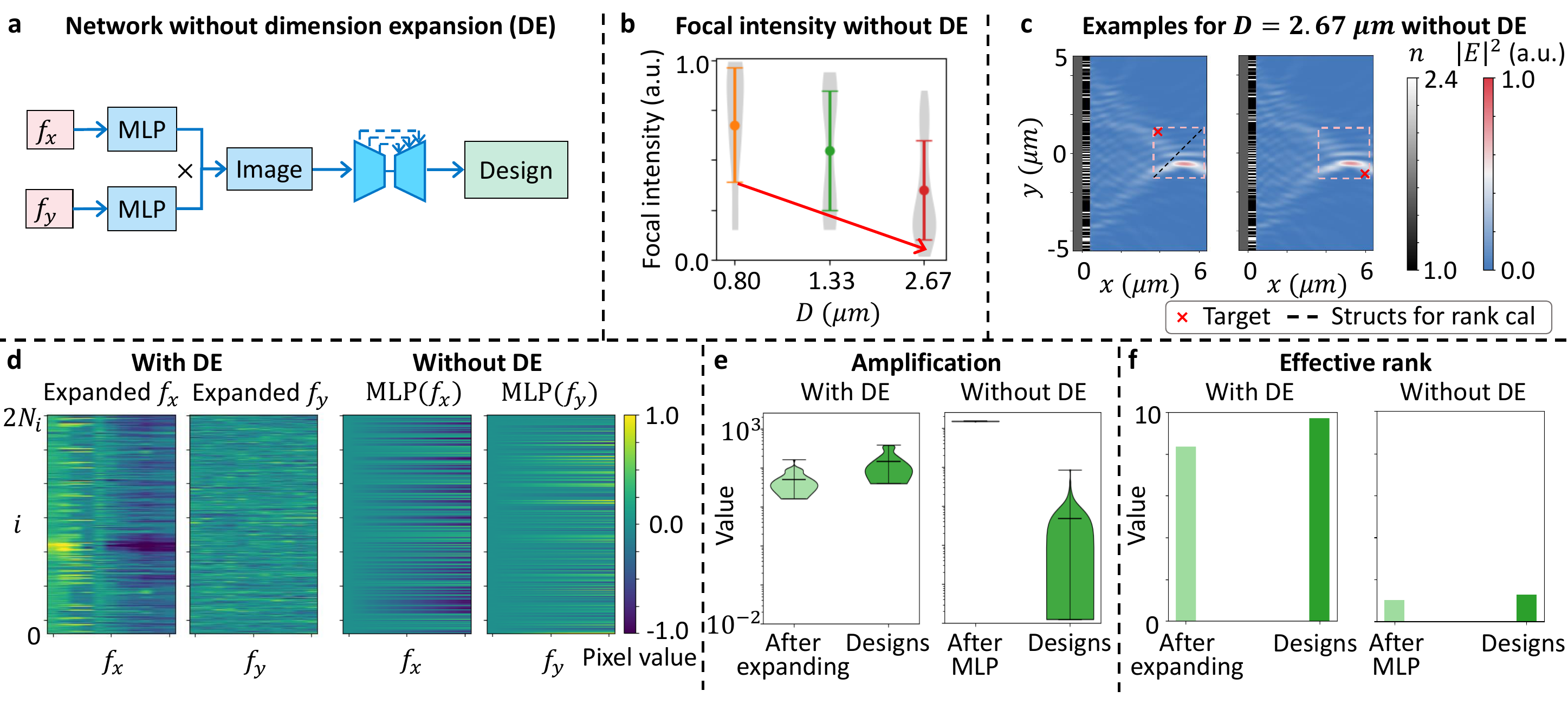}
    \caption{\textbf{Mechanism of dimension expansion (DEN).} \textbf{(a)} Network architecture without dimension expansion. \textbf{(b)} Focal intensity as a function of the target-region size $D$, showing performance degradation and eventual collapse without dimension expansion. \textbf{(c)} Representative failure cases for $D = 2.67\,\mu\mathrm{m}$ without dimension expansion. \textbf{(d)} Comparison of learned conditioning representations. DEN produces more diverse and spatially varying representations than direct MLP mappings. \textbf{(e)} Amplification of target-coordinate variations by DEN, where small differences in target coordinates lead to distinguishable representations and structures. \textbf{(f)} Effective-rank analysis of the generated structures. DEN increases the effective rank, indicating improved structural diversity and reduced mode-collapse behavior.}
    \label{Figure_witho_dimexp}
\end{figure}

\subsection{The role of dimension expansion}
\label{sec_analy_den}


Dimension expansion is the key ingredient enabling the performance gains observed throughout this work. To isolate its contribution, we perform ablation studies using both a simplified multilayer perceptron (MLP) backbone and the full inverse-design network. Figure~\ref{Figure_compare_mlpuseembed}(a) shows a baseline MLP that directly receives the target focal coordinates $(f_x,f_y)$ as input and predicts the corresponding refractive-index distribution. Without dimension expansion, the generated structures produce weak focusing performance, especially as the target focal region becomes larger. For the largest focal region considered here, the resulting electric-field distributions are diffuse and poorly localized, indicating that the MLP struggles to learn an effective mapping from low-dimensional scalar inputs to high-dimensional nanophotonic structures. Increasing the model size does not improve performance and can even lead to further degradation, suggesting that the primary limitation is not network capacity but the low-dimensional target representation itself. In contrast, introducing dimension expansion dramatically improves the performance of the same MLP backbone, as shown in Fig.~\ref{Figure_compare_mlpuseembed}(b). With the expanded input representation, the MLP generates substantially more localized focal spots and cleaner field distributions across all tested target regions. The improvement is observed consistently for small, medium, and large focal regions. Together, these results show that once dimension expansion is introduced, even a simple MLP becomes capable of generating effective target-dependent structures.

The same conclusion is observed in the full inverse-design framework. Figure~\ref{Figure_witho_dimexp}(a) illustrates the baseline model in which the target focal coordinates are fed directly to the network without the Fourier encoding and outer-product expansion introduced in Sec.~\ref{sec::dimenexpan}. All other components of the training procedure are kept unchanged. As shown in Fig.~\ref{Figure_witho_dimexp}(b), removing dimension expansion causes the focal intensity to decrease substantially as the target focal region becomes larger and eventually collapse for $D=2.67~\mu\mathrm{m}$. Representative examples are shown in Fig.~\ref{Figure_witho_dimexp}(c). The generated structures fail to produce well-confined focal spots and instead exhibit diffuse field distributions that are only weakly dependent on the target coordinates. These observations indicate that the network struggles to learn a meaningful target-conditioned mapping when the conditioning variables remain in their original low-dimensional form.

The origin of this behavior can be understood from the learned conditioning representations. As shown in Fig.~\ref{Figure_witho_dimexp}(d), dimension expansion produces richer and more distinctive representations, whereas the baseline model maps many target conditions to substantially more uniform latent representations. Consequently, nearby focal positions become easier to distinguish after dimension expansion. This effect is reflected in the amplification factor shown in Fig.~\ref{Figure_witho_dimexp}(e), where small changes in $(f_x,f_y)$ produce significantly larger variations in the conditioning representations and generated structures. As a result, the generated designs occupy a substantially higher-dimensional subspace, leading to the larger effective ranks observed in Fig.~\ref{Figure_witho_dimexp}(f). Together, these effects increase structural diversity and help prevent the mode-collapse behavior visible in Fig.~\ref{Figure_witho_dimexp}(c).

Additional empirical and theoretical analyses are provided in Appendix Secs.~\ref{sec::addana} and~\ref{sec::matheanaly}. These analyses further support the interpretation that dimension expansion improves target separability and representation diversity, helping explain the substantial performance gains observed throughout this work.

\section{Discussion and future directions}

DEN enables efficient inverse design over large continuous target spaces by transforming low-dimensional design objectives into structured high-dimensional conditioning representations. Across both metalens and Y-splitter benchmarks, the proposed framework achieves high-performance designs while substantially reducing the number of simulations required relative to conventional optimization-based approaches.

Although DEN eliminates the need for pre-generated datasets and substantially reduces the number of simulations required relative to conventional optimization-based approaches, the training process still relies on repeated full-wave electromagnetic simulations. Consequently, the overall computational cost remains coupled to the complexity of the underlying physical solver. This seems like a fundamental physical limit, rather than a limitation of the method, but it is important to acknowledge. As a result, the proposed framework remains constrained by the scale of problems that can be simulated using current electromagnetic solvers~\cite{2018Vuckovic,ShanhuiFan_2018adjoint}.

Another sensible but important point apparent in our two examples is that the complexity of the target-conditioned mapping appears to depend strongly on the underlying inverse-design problem. The Y-splitter task can be learned from only 21 training targets, whereas the metalens problem requires substantially denser target-space coverage to achieve reliable performance across large focal regions. Different inverse-design problems likely possess fundamentally different target-space complexities, even if the corresponding structures contain comparable numbers of design parameters. Understanding the relationship between target-space complexity, sampling density, and model capacity remains an open question.

Third, the current framework learns a deterministic target-to-structure mapping and therefore converges toward a single family of high-performance solutions. While this behavior alleviates the non-uniqueness issue that often complicates inverse-design training, it does not explicitly characterize the full set of feasible structures associated with a given target response. Integrating dimension expansion with probabilistic or generative formulations may provide a means to explore multiple solution branches while retaining the target sensitivity demonstrated in this work~\cite{goodfellow2014generat,VAE_2013}.

More broadly, the present results suggest several opportunities for future investigation. One direction is to examine whether the proposed conditioning strategy remains effective for larger-scale three-dimensional devices. In particular, the current framework could be generalized to three-dimensional nanophotonic structures by encoding three-dimensional focal coordinates $(f_x,f_y,f_z)$ and constructing higher-order tensor representations through multi-way outer products. The resulting conditioning representation would naturally become a three-dimensional tensor rather than a two-dimensional image, which could then be processed using three-dimensional neural-network architectures. Voxelized refractive-index distributions provide a natural extension of the pixelized structures considered in this work. Practical application to such settings will likely require more efficient electromagnetic solvers. Future implementations may benefit from differentiable GPU-accelerated simulation frameworks, hybrid simulation-surrogate approaches, neural operators, or RCWA-based approaches such as TORCWA,~\cite{lu2019deeponet,li2020fourier,kovachki2023neural,torcw} which may help improve the scalability of inverse design for large-scale three-dimensional photonic structures.

A second direction is to extend DEN to substantially larger target spaces involving many simultaneously specified design objectives. For example, focal position, wavelength, polarization response, efficiency, or other optical-performance metrics could be incorporated directly into the conditioning representation. However, a straightforward extension of the current outer-product construction to many target variables would lead to rapidly growing representation dimensionality due to the multiplicative expansion of tensor-product spaces. One possible approach is therefore to generalize the current Fourier-encoding and outer-product framework hierarchically. Rather than constructing a single high-order tensor over all target variables, related objectives could first be grouped according to their physical roles and expanded independently before being fused through learned projection layers. Such hierarchical conditioning representations may provide a scalable way to extend dimension expansion to complex multi-objective inverse-design problems while controlling the growth of representation dimensionality. Because such extensions would likely require substantially denser target-space sampling than considered here, many-input many-output Maxwell solvers such as MESTI~\cite{mes} may offer an efficient way to evaluate large numbers of target conditions during training, potentially making large-scale multi-objective conditioning practical for target-conditioned inverse design.

\section{Methods}
\label{sec_met}

The proposed DEN is trained through direct electromagnetic feedback and therefore requires repeated evaluations of the optical response throughout the optimization process. Consequently, the computational efficiency of the underlying electromagnetic solver plays a critical role in determining the overall training cost. To balance physical accuracy and computational efficiency, we employ a hybrid numerical scheme that combines finite-difference frequency-domain (FDFD) simulations with Fourier-transform-based propagation.

Within the device region, where the refractive index varies spatially and strongly influences the electromagnetic response, Maxwell's equations are solved using an FDFD formulation introduced in Eq.~\ref{equation1} in Sec. ~\ref{sec_net}; the solver accurately captures multiple scattering, interference, and mode-conversion effects inside the nanophotonic structure and provides fully differentiable gradients for back-propagation.
Outside the design region, the medium is homogeneous and the electromagnetic field propagates through free space. In these regions, solving the full-wave problem is unnecessary and computationally inefficient. Instead, we employ a Fourier-transform-based propagation method. The electromagnetic field is first transformed into the spatial-frequency domain, where each spatial-frequency component propagates independently according to the free-space transfer function. The propagated field is then transformed back to the spatial domain using an inverse Fourier transform. This approach efficiently models long-distance propagation while avoiding the computational cost associated with large full-wave simulation domains.

The resulting hybrid formulation combines the strengths of both approaches. The FDFD solver is restricted to the region where complex light--matter interactions occur, while Fourier propagation is used in homogeneous regions where analytical propagation operators are available. This significantly reduces the computational burden of the training process while preserving the physical accuracy required for inverse design. The approach is particularly advantageous for metalens problems, where the focal plane can be located several wavelengths away from the device and would otherwise require a substantially larger simulation domain.
All simulations are implemented within a differentiable framework, allowing gradients to propagate through both the FDFD and Fourier-propagation stages. As a result, the network parameters can be optimized directly using gradient-based methods without requiring surrogate models or numerical approximations of the adjoint sensitivities.

To ensure fabrication feasibility, the refractive-index distribution is progressively driven toward binary material states during training. Direct thresholding would introduce discontinuities into the optimization process and prevent gradient-based learning. Therefore, we employ differentiable binarization strategies that preserve end-to-end differentiability while gradually enforcing fabrication-compatible structures.

The specific binarization formulation depends on the design task. For the metalens problem, we adopt an imaginary-penalty-based approach that suppresses intermediate refractive-index values during optimization. For the Y-splitter problem, we employ a sigmoid-based formulation whose transition becomes progressively sharper throughout training. In both cases, the binarization constraint is introduced gradually, allowing the network to first explore a continuous design space before converging toward discrete material distributions.
These continuation strategies improve optimization stability and enable efficient training of the inverse-design network while maintaining compatibility with binary fabrication processes. The detailed formulations, together with the definition of the binarization degree used throughout this work, are provided in Appendix Sec.~\ref{sec::binar}. The architecture of the attention-based U-Net backbone is described in Appendix Sec.~\ref{sec::netdetai}, including the encoder--decoder structure, residual blocks, skip connections, and attention module employed in the inverse-design network.

\subsubsection*{Acknowledgments}
We thank Dr. Xinyu Nie, Dr. Shiyu Li and Mr. Jianwei Zhang from University of Southern California for valuable discussions.
We appreciate Professor Yonggang Shi from University of Southern California for providing the GPU workstations for the experiments. We are deeply grateful to the late Professor Chia Wei (Wade) Hsu from University of Southern California for his mentorship, guidance, and support, which laid an important foundation for this work. Though he is no longer with us, his wisdom, encouragement, and enduring legacy continue to guide and inspire us.

\subsubsection*{Funding} This work was supported by C.W. Hsu's National Science Foundation CAREER award ECCS-2146021. O.D. Miller's contributions were supported by a 2024 Sony Research Award and by the Simons Collaboration on Extreme Wave Phenomena Based on Symmetries (award no. SFI-MPS-EWP-00008530-09).


\bibliography{iclr2025_conference}
\bibliographystyle{unsrtnat}

\newpage

\appendix
\section{Comparison with existing deep-learning-based methods}
\label{sec::appcomexis}
To further evaluate the scalability and practical utility of the proposed dimension expansion network (DEN), we compare it with representative deep-learning-based nanophotonic design methods in terms of accessible design space, fabrication tolerance, structural complexity, and computational cost. The comparison includes both forward-simulation surrogate models, which predict optical responses from given structures, and inverse-design methods, which directly generate structures from target optical responses. While these approaches have demonstrated promising performance on a variety of nanophotonic devices, they often rely on large collections of simulated or optimized structures, resulting in substantial data-generation costs.

Figure~\ref{Figure_compare_existminstruc} compares the minimum normalized fabrication feature size, $\Delta V_{\min}/\lambda^d$, against the number of simulations or optimized structures used during training. Here, $\Delta V_{\min}$ denotes the minimum fabrication feature size within the accessible design space, normalized by the operating wavelength $\lambda$ and the number of independent design dimensions $d$. Larger values indicate that the generated structures remain manufacturable with relatively coarse fabrication resolution, whereas smaller values correspond to designs requiring finer fabrication precision.

A clear empirical trend can be observed across previously reported methods. In general, methods capable of accessing larger or more flexible design spaces tend to require increasingly fine fabrication features or substantially larger training datasets. This trend is captured by the regression curve and the associated $95\%$ confidence interval shown in Fig.~\ref{Figure_compare_existminstruc}. Consequently, improving design flexibility is often accompanied by reduced fabrication tolerance or increased computational cost.

In contrast, DEN occupies a distinct region of this performance--cost landscape. The proposed method lies substantially above the regression trend and outside the $95\%$ confidence interval, indicating a more favorable balance between fabrication tolerance and computational cost than existing approaches. For the metalens benchmark considered in this work, DEN achieves a minimum fabrication feature size of approximately $\lambda/15$ while requiring only about $8\times10^3$ physics-driven training simulations. This feature size is considerably larger than those reported by many existing deep-learning-based inverse-design methods, suggesting improved manufacturability without sacrificing design capability.

A broader comparison is summarized in Table~\ref{Table_comparison_others}, which includes representative approaches based on multilayer perceptrons (MLPs), convolutional neural networks (CNNs), transformers, tandem networks, generative adversarial networks (GANs), variational autoencoders (VAEs), and hybrid architectures. For each method, we compare the number of simulations or optimized structures used during training, the number of structural parameters, the physical device size, the minimum operating wavelength, the normalized designable volume, and the minimum fabrication feature size.

Several trends emerge from this comparison. First, forward-simulation approaches generally require large datasets to accurately approximate electromagnetic responses. Many methods rely on tens of thousands to millions of simulations, particularly when the parameter space becomes large or when broadband optical responses must be predicted. Although these methods provide efficient surrogates once trained, the initial data-generation cost can be substantial.

Second, many inverse-design approaches reduce the need for online optimization but often rely on collections of pre-generated optimized structures. Tandem networks require supervised pairs of structures and responses, while GAN- and VAE-based methods frequently require thousands to millions of training examples. In addition, generative approaches commonly produce distributions of candidate structures, requiring post-selection or further optimization to identify high-performance designs. As a result, the overall computational cost may significantly exceed the nominal size of the training dataset.

Third, existing methods often face a trade-off between design-space coverage, fabrication tolerance, and computational cost. Approaches capable of exploring large design spaces frequently require either extremely fine fabrication features, large training datasets, or both. For example, several GAN-based and VAE-based methods operate in high-dimensional parameter spaces but rely on extensive collections of pre-generated structures and often achieve minimum feature sizes substantially smaller than those reported here.

In contrast, DEN directly performs physics-based inverse design without requiring either pre-generated simulation datasets or optimized structures. The network is trained solely through electromagnetic feedback and therefore learns a target-conditioned design manifold rather than a supervised mapping from previously generated examples. Consequently, the computational effort is invested in learning the design process itself instead of constructing training data.

This advantage is reflected in both benchmarks considered in this work. The proposed metalens benchmark achieves a normalized designable volume of approximately $40$, one of the largest values reported among the methods surveyed here, while maintaining a comparatively large minimum fabrication feature size of approximately $\lambda/15$. Similarly, the proposed Y-splitter benchmark addresses a 900-parameter inverse-design problem using only 3750 training simulations and achieves a minimum fabrication feature size of approximately $\lambda/28$. Notably, the number of structural parameters in the Y-splitter benchmark exceeds that of many previously reported photonic inverse-design studies, yet the required computational cost remains relatively modest.

Taken together, these results indicate that DEN scales favorably with both structural dimensionality and accessible design space while preserving fabrication-friendly geometries. Rather than relying on large collections of optimized examples, the proposed method amortizes optimization across the target space through direct electromagnetic training. As a result, DEN simultaneously achieves large designable volumes, relatively coarse fabrication requirements, and low simulation costs.

Overall, the proposed method occupies a distinct region of the performance--cost landscape compared with existing deep-learning-based nanophotonic design approaches. By combining fully unsupervised training with direct electromagnetic optimization, DEN substantially reduces the data-generation burden while enabling large accessible design spaces and fabrication-tolerant structures. These results suggest that DEN overcomes a key limitation of many existing deep-learning-based inverse-design frameworks and provides a favorable balance between computational cost, design flexibility, and manufacturability.

\begin{figure}[hbt]
    \centering
    \includegraphics[width=0.6\textwidth]{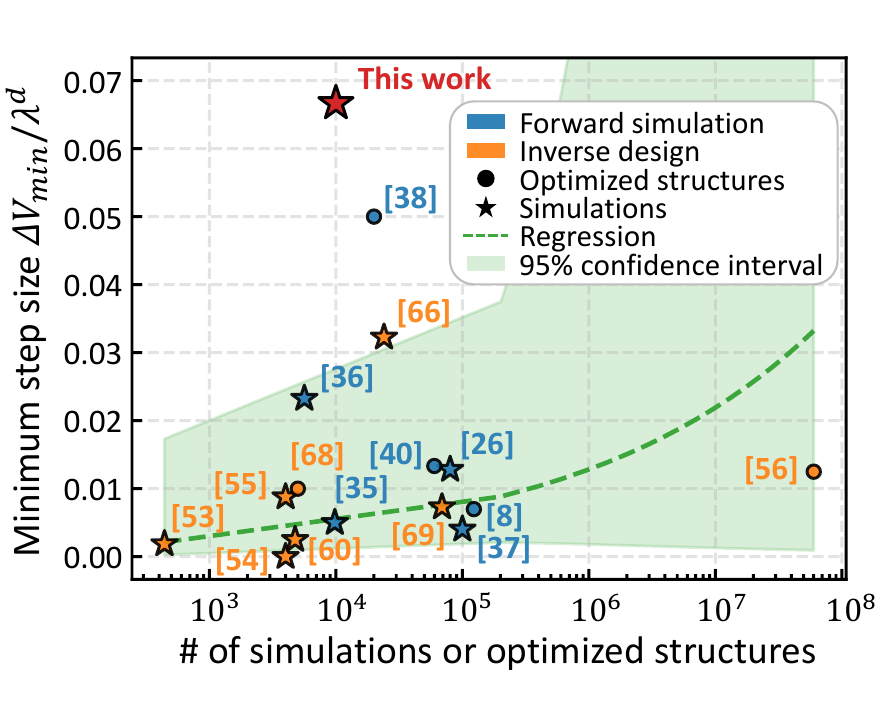}
    \caption{\textbf{Comparison with existing metalens design methods.} We compared the minimum design step size (in $\Delta V_{min}/(\lambda)^d$, where $\Delta V_{min}$ is the minimum step size measured in units consistent with the design dimensionality and $d$ is the number of independent design axes) and the number of optimized structures or simulations used to generate training data and train the network. ``Forward simulation'' are methods that only use deep learning to predict the forward simulation results of given structures, and ``Inverse design'' are methods that directly output the designed structures. ``Optimized structures'' refer to datasets in which each training structure is produced through optimization, typically requiring many more simulations than the number of training samples, whereas ``simulations'' denote datasets generated by random sampling with only one simulation per data point. The regression line and the 95\% confidence interval are also shown. The regression follows a correlation of the form $y = a x^{b}$.}
    \label{Figure_compare_existminstruc}
\end{figure}

\begin{landscape}
\begin{table}[tb]
\centering
\caption{Representative deep learning models for nanophotonics devices forward simulation and inverse design.}
\label{Table_comparison_others}
\begin{threeparttable}
\resizebox{1.45\textwidth}{!}{
\begin{tabular}{l l l c c c c c c}
\toprule
\textbf{Structure} & \textbf{Network} & \textbf{Ref.} &
\textbf{Sims (structs) in training} & \textbf{Struct. params} & \textbf{Struct. size (nm)} &
\textbf{$\lambda_{\min}$ (nm)} & \textbf{Designable volume} &
\textbf{Min. step size (nm)}\\
\midrule
\multicolumn{9}{c}{\textbf{Forward simulation}} \\
\midrule
Meta atom & MLP & \cite{Zhang2024_Vortex} & 9,800 & 4 (26)\tnote{*} & 500$\times$500 & 1,000 & 1 & 5 ($\lambda/200$)\\
Meta atom & MLP & \cite{mlpdnnatom} & 5,630 & 4 (25)\tnote{*} & 320$\times$320 & 428 & 2.24 & 10 ($\lambda/43$)\\
Meta atom & MLP & \cite{Zhelyeznyakov2021_ScattererField} & (123,210) & 1 & 443$\times$443 & 633 & 1.96 & 4.43 ($\lambda/143$)\tnote{***}\\
Meta atom & CNN & \cite{An2022_CouplingCNN} & 100,000 & 2 & 800$\times$800 & 1,550 & 1.1 & 6.25 ($\lambda/248$)\\
Meta atom & Transformer & \cite{Wang2025_DualLayer} & 80,000 & 196 & 300$\times$300 & 520 & 1.33 & 20 ($\lambda/78$)\\
Meta atom & Transformer & \cite{transformeratomforw} & (20,000) & 400 & 420$\times$420 & 400 & 4.4 & 20 ($\lambda/20$)\\
Meta atom & Mixed & \cite{Wang2025_KAN} & (60,000) & 256 (201)\tnote{*} & 700$\times$640 & 1,500 & 0.80 & 20 ($\lambda/75$)\\
Metasurface spectral filters & MLP & \cite{mlpatomforward} & 40,000 & 17 & 1,000$\times$1,000 & 1,000 & 4 & 50 ($\lambda/20$)\\
Square plasmonic switch & MLP & \cite{mlpatomsixout} & (18,432) & 6 (2)\tnote{*} & 1,000$\times$1,000 & 1,000 & 4 & 3 ($\lambda/333$)\\
Nanoparticles & MLP & \cite{Peurifoy2018_NanoparticleANN} & 50,000 & 8 (7)\tnote{*} & 1,120 & 400 & 5.6 & 1 ($\lambda/400$)\tnote{***}\\
Y-splitter & MLP & \cite{Tahersima2019_Splitter} & (20,000) & 400 & 2,600$\times$2,600 & 1,450 & 12.86 & 90 ($\lambda/16$)\\
\midrule
\multicolumn{9}{c}{\textbf{Inverse design}} \\
\midrule
Meta atom & MLP & \cite{Gu2021_Bifocal} & 441 & 2 & 460$\times$460 & 532 & 2.99 & 1 ($\lambda/532$)\tnote{***}\\
Meta atom & MLP & \cite{mlpatominver} & 4,000 & 5 & 100,000$\times$100,000 & 300,000 & 0.44 & 1 ($\lambda/300,000$)\tnote{***}\\
Meta atom & Tandem MLP & \cite{Qiu2024_GSST} & 4,758 & 2 & 360 & 8,000 & 0.09 & 20 ($\lambda/400$)\tnote{***}\\
Meta atom & MLP & \cite{dnnatom} & 4,000 & 2 & 2,200,000$\times$7,000,000 & 2,855,166 & 7.56 & 25,000 ($\lambda/114$)\\
Meta atom & Transformer & \cite{Yan2025_MetasurfaceViT} & 60,000,000 & 6 & 300$\times$300 & 400 & 2.25 & 5 ($\lambda/80$)\tnote{***}\\
Meta atom & CDCGAN & \cite{cdcganatom} & (5,000) & 144 & 7,200,000$\times$7,200,000 & 30,000,000 & 0.23 & 300,000 ($\lambda/100$)\\
Meta atom & VAE & \cite{Xiong2024_Multiwavelength} & 24,000 & 4,096 & 640$\times$640 & 1,000 & 1.64 & 32 ($\lambda/31$)\\
Meta atom & GAN & \cite{An2021_MultifunctionalGAN} & 69,000 & 1,024\tnote{**} & 2,800$\times$2,800 & 6,000 & 0.87 & 43.75 ($\lambda/137$)\\
Complex amplitude filter & MLP & \cite{Ou2025_FocusEngineering} & 10,000 & 20 & 250 & 532 & 0.94 & 2 ($\lambda/266$)\\
Chiral metamirror & VAE & \cite{Ma2019_ProbabilisticVAE} & 20,000 & 100 & 2,000$\times$2,000 & 3,000 & 1.78 & 31.25 ($\lambda/96$)\\
Periodic atoms & cGAN & \cite{Mall2020_CyclicalDL} & (1,500) & 4,096\tnote{**} & 230$\times$230 & 400 & 1.32 & 3.59 ($\lambda/111$)\\
Periodic atoms & GAN & \cite{Xia2025_DeepLearningHybrid} & (15,640) & 256 & 70,000$\times$70,000 & 100,000 & 1.96 & 4,000 ($\lambda/25$)\\
Plasmonic metasurface & CNN & \cite{cnnatom} & 25,000 & 6 & 700$\times$700 & 600 & 5.44 & 1 ($\lambda/600$)\\
Plasmonic metasurface & MLP & \cite{mlpatomplas} & 15,000 & 8 & $\sim$400$\times$400\tnote{***} & 600 & 1.78 & Not given\\
Silver antennae & GAN & \cite{So2019_cDCGAN} & 10,150 & 4,096\tnote{**} & 500$\times$500 & 600 & 2.78 & 7.81 ($\lambda/216$)\\
Chiral metamaterial & Tandem MLP & \cite{Ma2018_Chiral} & 25,000 & 5 & $\sim$2,500$\times$2,500$\times$1,000\tnote{***} & 3,747 & 0.95 & 1 ($\lambda/3747$)\\
Metagrating & MLP & \cite{mlpgrating} & 50,000 & 9 & 4,500 & 1,000 & 9 & 30 ($\lambda/33$)\\
Metagrating & MLP & \cite{Adibi_2020_npjC} & 4,000 & 10 & 200$\times$2,000 & 1,500 & 0.71 & 10 ($\lambda/150$)\tnote{***}\\
Metagrating & GAN & \cite{Jiang2019_MetagratingGAN} & (5,000) + $5,000 \times n_{test}$ & 32,768 & 500$\times$1,220 & 1,000 & 2.44 & 3.90 ($\lambda/256$)\\
Metagrating & GAN & \cite{Wen2020_RobustGAN} & 7,008,000 & 8,192 & 650$\times$2,269 & 1,300 & 3.49 & 17.73 ($\lambda/73$)\\
Patterned 2D slit array & GAN & \cite{ganpatter} & (24,350) & 800 & 40 & 400 & 0.2 & 2 ($\lambda/200$)\\
Thin film & Tandem MLP & \cite{thinmlp} & 550,000 & 20 & 2,400 & 400 & 12 & 1 ($\lambda/400$)\\
Grating coupler & Tandem MLP & \cite{Xu2021_TandemImproved} & 84,480 & 120 & 27,600$\times$220 & 1,300 & 14.37 & 10 ($\lambda/130$)\\
Y-splitter & GAN & \cite{Tang2020_IntegratedGAN} & (16,000) + training & 400 & 2,250$\times$2,250 & 1,250 & 12.96 & 42 ($\lambda/30$)\\
Y-splitter & GAN & \cite{ganysplit} & (10,000) & 400 & 2,000$\times$2,000 & 1,200 & 11.11 & 100 ($\lambda/12$)\\
\midrule
Y-splitter & DEN & This work & 0 + 3750 in training & 900 & 1,660$\times$1,660 & 1,550 & 4.59 & 55 ($\lambda/28$)\\
Metalens & DEN & This work & 0 + $100\times \frac{S_{focal\_region}}{(\lambda/2)^2}$ in training & 301 & 10,000 & 500 & 40 & 33.33 ($\lambda/15$)\\
\bottomrule
\end{tabular}}
\begin{tablenotes}[flushleft]\footnotesize
\item[*] Numbers in parentheses denote output field parameters (e.g., phase response in \cite{Zhang2024_Vortex}).
\item[**] GAN models can handle large parameter spaces but often reproduce training data patterns.
\item[***] These numbers are not reported. They are calculated from the results section.
\end{tablenotes}
\end{threeparttable}
\end{table}
\end{landscape}

\section{Network structures}
\label{sec::netdetai}

\begin{figure}[hbt]
    \centering
    \includegraphics[width=\textwidth]{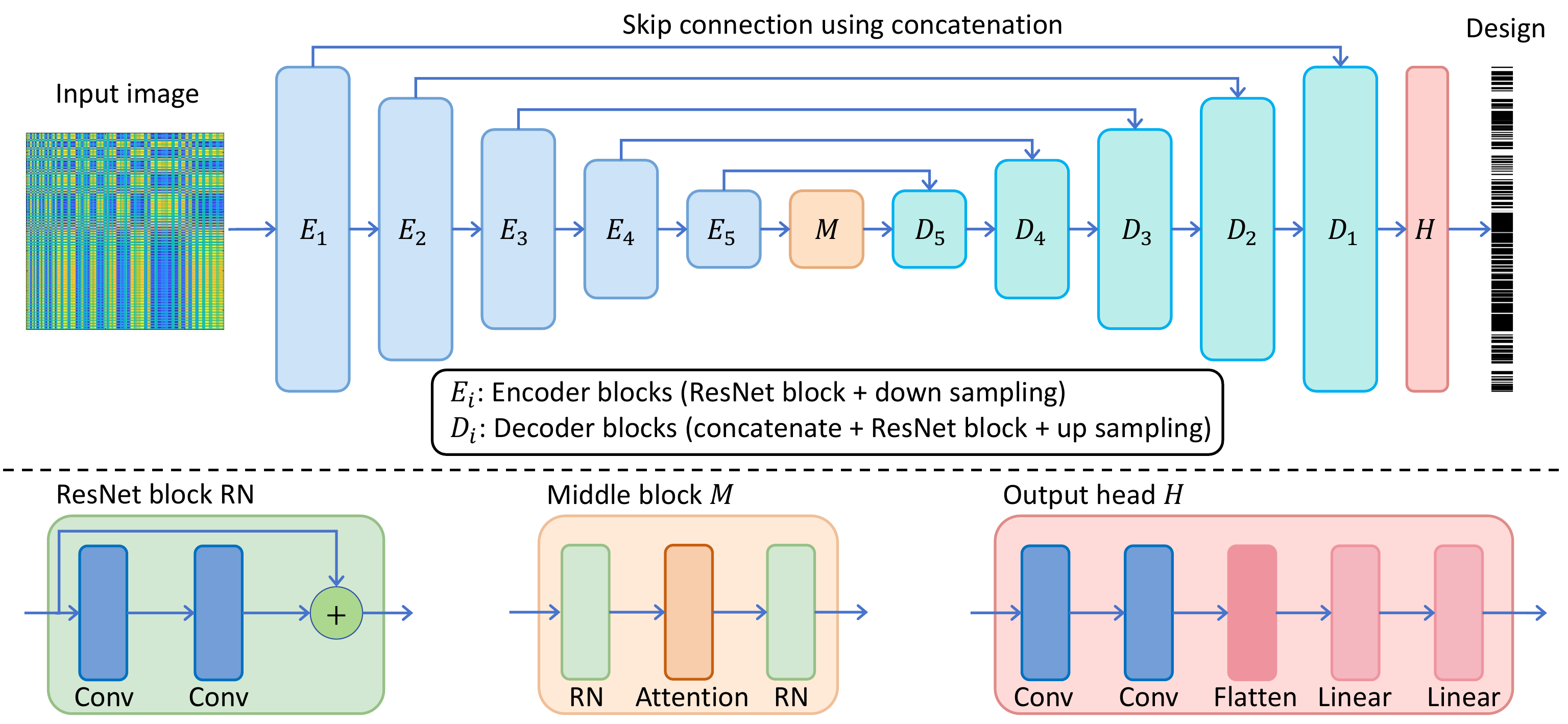}
    \caption{\textbf{Network structure for design generation from the input image.} The proposed framework employs an attention-based U-Net architecture to generate nanophotonic structures from the dimension-expanded inputs. The network consists of encoder blocks ($E_i$), decoder blocks ($D_i$), skip connections, a bottleneck attention module ($M$), and an output head ($H$). Residual connections are incorporated within each block to facilitate optimization and improve feature propagation.}
    \label{Figure_unet_struc}
\end{figure}

We employ an attention-based U-Net architecture to map the dimension-expanded target representation to the corresponding refractive-index distribution. This architecture is particularly suitable for the proposed framework because both the input and output possess explicit spatial structure. The dimension-expansion module transforms the target parameters into a two-dimensional feature map, while the desired output is a spatial refractive-index profile. Consequently, the inverse-design problem can be formulated as an image-to-image mapping task, for which encoder--decoder architectures have proven highly effective~\cite{unet}.

As shown in Fig.~\ref{Figure_unet_struc}, the network consists of three main components: an encoder, a decoder, and skip connections between corresponding stages. The encoder progressively extracts hierarchical features through successive down-sampling operations, allowing local information to be aggregated into increasingly global representations. The decoder then reconstructs the spatial refractive-index distribution from these latent features through up-sampling operations. Skip connections directly transfer high-resolution information from the encoder to the decoder, enabling the network to preserve fine structural details while maintaining global consistency.

Each encoder and decoder stage contains residual convolutional blocks~\cite{resn}, which improve optimization stability and facilitate information propagation through the network. The residual formulation is particularly beneficial in the present application because the generated structures often contain both fine-scale fabrication features and larger-scale geometric patterns that must be represented simultaneously.

The down-sampling operations progressively enlarge the effective receptive field of the network. For example, in the Y-splitter design task, where the pixel size is $55\times55~\mathrm{nm}^2$, a $3\times3$ convolution kernel initially covers a physical region of $165\times165~\mathrm{nm}^2$. After four down-sampling stages, the same kernel effectively captures information over a region of approximately $2640\times2640~\mathrm{nm}^2$, comparable to the size of the entire design domain. This enables the network to capture long-range interactions between different regions of the structure.

To further model global dependencies, we introduce an attention module in the bottleneck layer of the U-Net~\cite{vaswani2017attention}. Unlike standard convolutions, which primarily aggregate information locally, the attention mechanism allows distant spatial locations to interact directly. This capability is particularly important for nanophotonic inverse design, where modifications in one region of the structure can significantly influence the optical response generated by distant regions. The attention block therefore complements the convolutional layers by capturing long-range correlations that are difficult to represent using local operations alone.

The final output head converts the decoded feature representation into the refractive-index distribution of the device. Two convolutional layers are first used to refine the reconstructed spatial features, followed by fully connected layers that generate the final structure representation. The resulting output is subsequently binarized using the differentiable procedures described in Sec.~\ref{sec_met}, ensuring compatibility with practical fabrication constraints.

It is important to note that the primary contribution of this work is not the specific network architecture itself, but rather the proposed dimension-expansion framework that enables effective conditioning of high-dimensional inverse-design problems. The attention-based U-Net serves as a representative backbone architecture, while the experiments in Sec.~\ref{sec_analy_den} demonstrate that the benefits of DEN generalize to substantially different architectures, including multilayer perceptrons.

\section{Binarization method}
\label{sec::binar}

To ensure fabrication compatibility, the refractive-index distribution generated by the network is gradually driven toward binary values during training. A direct thresholding operation would introduce discontinuities and make the optimization problem non-differentiable, preventing the use of gradient-based learning. Instead, we employ differentiable binarization strategies that allow the network to explore continuous design spaces during the early stages of optimization while progressively converging toward binary structures as training proceeds.

For the metalens design task, we adopt an imaginary-penalty approach. The refractive index is expressed as
\begin{equation}
n = n_r + i\,\alpha(n_r)\,\gamma(t),
\label{equation_metalens}
\end{equation}
where $n_r$ is the real-valued refractive index predicted by the network,
\begin{equation}
n_r
=
\mathrm{clip}(x+1.7,\; 1.0,\; 2.4),
\end{equation}
and
\begin{equation}
\alpha(n_r)
=
\left|
|n_r-1.7|-0.7
\right|.
\end{equation}
The penalty function $\alpha(n_r)$ vanishes at the desired binary refractive-index values ($n_r=1.0$ and $n_r=2.4$) and reaches its maximum near the midpoint between them. Consequently, intermediate refractive-index values experience stronger attenuation than values close to the binary states. The weighting factor $\gamma(t)$ is gradually increased during training, progressively strengthening the penalty and driving the refractive-index distribution toward binary solutions. Because the formulation remains continuous and differentiable, gradients can still be propagated through the electromagnetic simulation during optimization.

For the Y-splitter design task, we employ a sigmoid-based binarization scheme. The refractive index is parameterized as
\begin{equation}
y
=
1.0
+
\frac{1}{1+e^{-wx}}
\times 2.5,
\label{equation3}
\end{equation}
where $x$ is the network output and $w$ controls the sharpness of the transition between air ($n=1.0$) and silicon ($n=3.5$). During training, the parameter $w$ is gradually increased. As a result, the sigmoid function evolves from a smooth continuous mapping to an increasingly sharp approximation of a binary threshold. This allows the network to initially explore a larger design space before progressively enforcing binary structures.

Although the two formulations are different, they share the same underlying principle. Both methods preserve differentiability during the early stages of optimization and gradually increase the strength of the binarization constraint as training proceeds. This continuation strategy improves optimization stability, avoids abrupt changes in the loss landscape, and enables the network to discover high-performance structures before converging to fabrication-compatible binary solutions.

Importantly, the proposed framework is not tied to a specific binarization strategy. The imaginary-penalty and sigmoid-based formulations used here serve as two representative examples for the metalens and Y-splitter design tasks, respectively. More generally, DEN can be combined with a broad class of differentiable fabrication-aware constraints while preserving end-to-end differentiability throughout the optimization process. This flexibility allows the framework to adapt to different device geometries, material systems, and fabrication requirements without modifying the underlying dimension-expansion network.

To quantify the degree of binarization of the generated structures, we define the binarization degree as

\begin{equation}
B
=
\frac{1}{N}
\sum_{i=1}^{N}
\left|
\frac{2(n_i-n_c)}
{n_{\max}-n_{\min}}
\right|,
\end{equation}
where $N$ is the total number of pixels in the design region, $n_i$ denotes the refractive index of the $i$-th pixel, and
\begin{equation}
n_c
=
\frac{n_{\max}+n_{\min}}{2}
\end{equation}
is the midpoint between the two target refractive-index values. The metric is normalized such that $B=1$ corresponds to a perfectly binary structure and $B=0$ corresponds to a completely intermediate-valued structure. Throughout this work, the generated designs typically achieve binarization degrees above $0.95$, indicating that the resulting structures are highly compatible with binary fabrication processes.

\section{Training data selection for Y-splitter}
\label{sec::tradatyspl}

\begin{figure}[hbt]
    \centering
    \includegraphics[width=\textwidth]{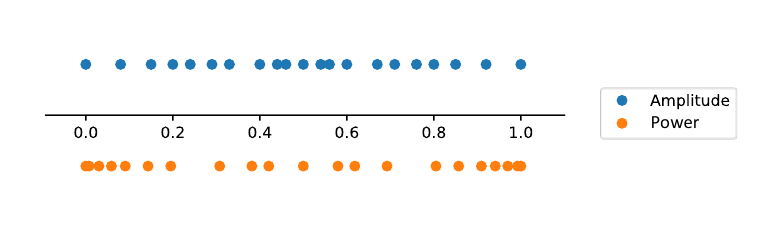}
    \caption{\textbf{Training data selection for the Y-splitter design task.} We firstly randomly selected 9 amplitude splitting ratios between 1\% : 99\% and 49\% : 51\%. Then the two boundary cases, 0\% : 100\% and 50\% : 50\%, are added. We then calculate the corresponding power splitting ratios and use these ratios as the training data. The symmetric data of the selected power splitting ratios are also used in the training.}
    \label{Figure_training_data}
\end{figure}

The Y-splitter design task aims to generate structures with arbitrary power splitting ratios between the two output ports. In practice, the corresponding structural variations are not uniformly distributed across the target space. Designs associated with extreme splitting ratios (e.g., $T_1 \rightarrow 0\%$ or $T_1 \rightarrow 100\%$) typically differ more significantly from one another than those associated with intermediate ratios. Consequently, a uniform sampling strategy in power space may provide insufficient coverage near the boundaries of the target range.

To address this issue, we construct the training targets in amplitude space rather than directly in power space. Specifically, we first randomly select nine amplitude ratios in the range from $1\%$ to $49\%$. The corresponding power splitting ratios are then obtained through the quadratic relationship
\begin{equation}
    T_1 = A^2,
\end{equation}
where $A$ denotes the selected amplitude ratio. Because of this nonlinear transformation, the resulting power-ratio samples become naturally concentrated near the extreme values of the target range. This produces a denser sampling of the regions where the structural response changes most rapidly.

The selected amplitude ratios and their corresponding power ratios are illustrated in Fig.~\ref{Figure_training_data}. While the amplitude samples are distributed approximately uniformly, the transformed power-ratio samples become increasingly dense near $0\%$ and $100\%$, providing improved coverage of the extreme splitting conditions.

To ensure symmetry between the two output ports, we further augment the training set by including the complementary power ratios $100\%-T_1$. In addition, three anchor points corresponding to the two extreme cases and the balanced splitter are explicitly included:
\begin{equation}
    T_1 = 0\%, \qquad
    T_1 = 50\%, \qquad
    T_1 = 100\%.
\end{equation}
These targets guarantee that the network directly observes the most important boundary and symmetry conditions during training.

As a result, the entire continuous target space is represented using only 21 carefully selected training samples. Despite this small training set, the resulting model is able to accurately interpolate between the sampled targets and generate high-performance designs across the full range of splitting ratios, as demonstrated in Sec.~\ref{sec::yspli}.

\section{Examples of DEN metalens}
\label{sec::examp}

To further illustrate the design capability of the proposed DEN, Fig.~\ref{Figure_example_stu} presents representative metalens structures generated for different target focal positions. The examples correspond to the largest focal-region size considered in this work, $D = 2.67~\mu\mathrm{m}$, at the operating wavelength $\lambda = 500~\mathrm{nm}$. The target focal points span a two-dimensional region with varying horizontal ($f_x$) and vertical ($f_y$) coordinates, resulting in a diverse set of design objectives within the same design space.

For each target location, DEN directly predicts a corresponding binary-like refractive-index distribution without requiring any iterative optimization during inference. Owing to the mirror symmetry of the design problem, focal points located symmetrically with respect to the optical axis correspond to two symmetry-related candidate structures. For every target focal position shown in Fig.~\ref{Figure_example_stu}, both symmetry-related candidates are evaluated, and the structure with the higher focal intensity is selected for visualization. This comparison ensures that the reported examples represent the best-performing design among the symmetry-equivalent solutions.

The resulting structures exhibit clear target-dependent variations and produce distinct optical responses. The corresponding electric-field intensity distributions demonstrate that the optical energy is consistently concentrated near the desired target locations marked by the red crosses. Across all examples, the focal spots closely follow the prescribed target coordinates, indicating that the learned mapping remains effective throughout the entire focal region.

As the target focal coordinates vary across the focal space, the focal spots move accordingly in both the horizontal and vertical directions while remaining well localized. Neighboring target positions produce related optical responses, whereas larger changes in focal position lead to increasingly different field distributions and structural patterns. These observations indicate that the network has learned a meaningful relationship between target coordinates and nanophotonic structures rather than converging to a single generic solution.

The field distributions further demonstrate robust focusing performance over a broad range of focal locations. Even when the target positions are located near the boundaries of the target focal region, the generated structures maintain strong field confinement and relatively low background scattering. This behavior is particularly notable because the focal region considered here spans multiple wavelengths in both directions, requiring the network to generate a wide variety of structures capable of redirecting light over a broad angular range.

Importantly, all of these designs are generated using the same trained network. Unlike conventional adjoint optimization, which requires an independent optimization process for each focal position, DEN learns a shared target-conditioned mapping that can efficiently produce many high-performance designs across the entire focal space. These examples therefore provide qualitative evidence that the proposed framework successfully captures the relationship between focal coordinates and nanophotonic structures over a large continuous target space.

Together with the quantitative results presented in Fig.~\ref{Figure_compare_adjoi}(c) of the main text, the examples in Fig.~\ref{Figure_example_stu} demonstrate that DEN can efficiently generate diverse, fabrication-compatible, and high-performance free-form metalens designs while substantially reducing the computational cost associated with repeated per-target optimization.

\begin{figure}
    \centering
    \includegraphics[width=\linewidth]{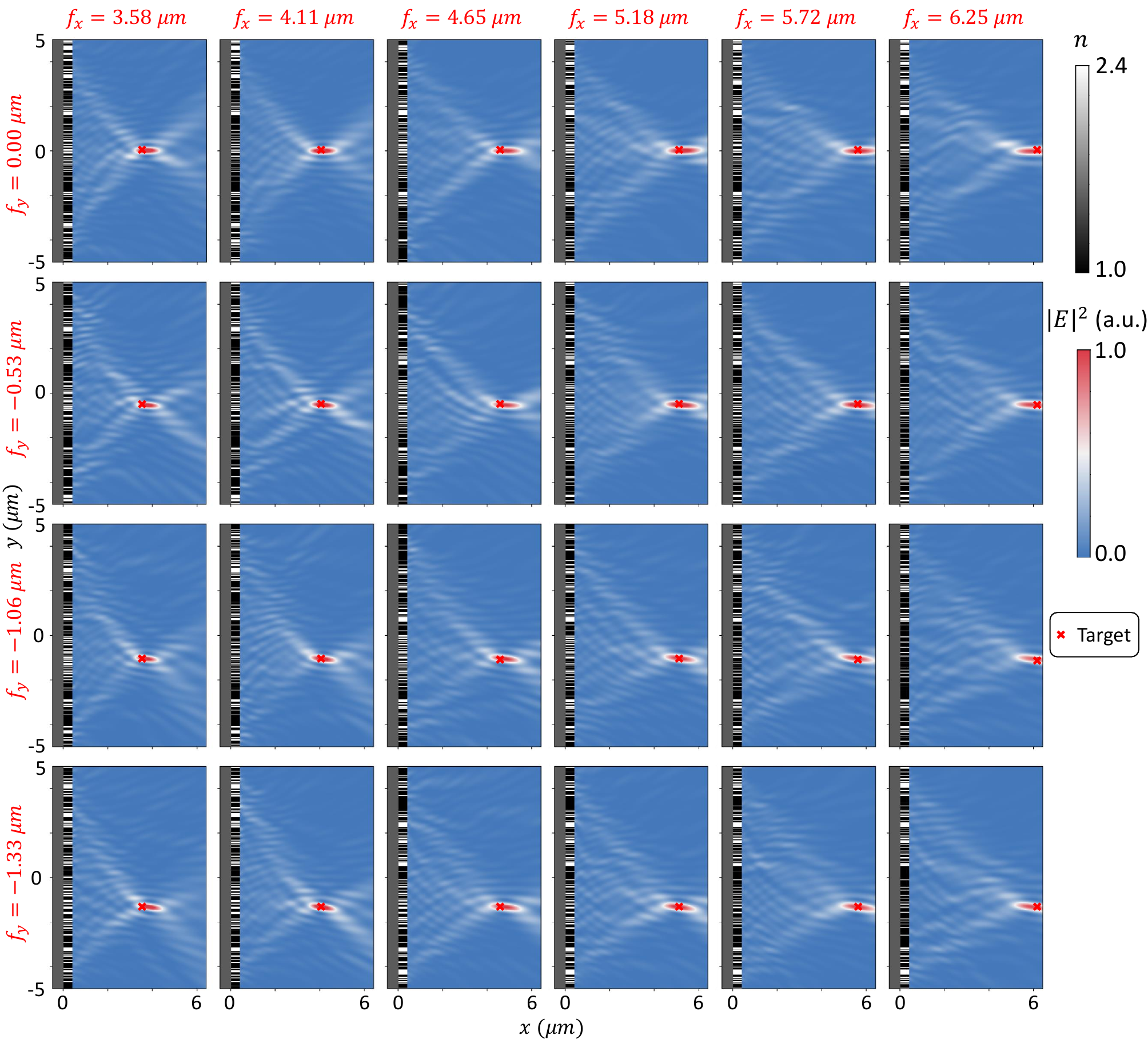}
    \caption{\textbf{Examples of metalens structures generated by DEN for $D = 2.67~\mu \mathrm{m}$ and $\lambda = 500~\mathrm{nm}$.} The target focal points span different horizontal ($f_x$) and vertical ($f_y$) coordinates within the focal region. For each target location, DEN directly predicts a corresponding binary-like refractive-index distribution and focuses light to the desired position. The red crosses indicate the target focal points. The generated structures produce well-confined focal spots with high intensity and low background scattering across the full target region.}
    \label{Figure_example_stu}
\end{figure}

\section{Scalability to dense per-pixel target sampling}
\label{sec::pixeldes}

\begin{table}[tb]
\centering
\caption{Performance comparison between sparse target sampling and dense per-pixel target sampling under different target-region sizes $D$.}
\label{table_efficiency_binary}
\begin{tabular}{c c c c}
\hline
$D$ ($\mu\mathrm{m}$) & \# designs & Focal intensity (a.u.) & Degree of binarization \\
\hline
\multirow{2}{*}{0.80} 
& 625 ($25 \times 25$) & $0.86 \pm 0.07$ & $0.96 \pm 0.01$ \\
& 25 ($5 \times 5$)  & $0.91 \pm 0.05$ & $0.97 \pm 0.01$ \\

\hline
\multirow{2}{*}{1.33} 
& 1681 ($41 \times 41$) & $0.86 \pm 0.08$ & $0.95 \pm 0.01$ \\
& 36 ($6 \times 6$)  & $0.90 \pm 0.04$ & $0.97 \pm 0.01$ \\
\hline
\multirow{2}{*}{2.67} & 6561 ($81 \times 81$) & $0.81 \pm 0.07$ & $0.95 \pm 0.01$ \\
& 121 ($11 \times 11$)  & $0.86 \pm 0.06$ & $0.97 \pm 0.01$ \\
\hline
\end{tabular}
\end{table}

\begin{figure}
    \centering
    \includegraphics[width=0.6\linewidth]{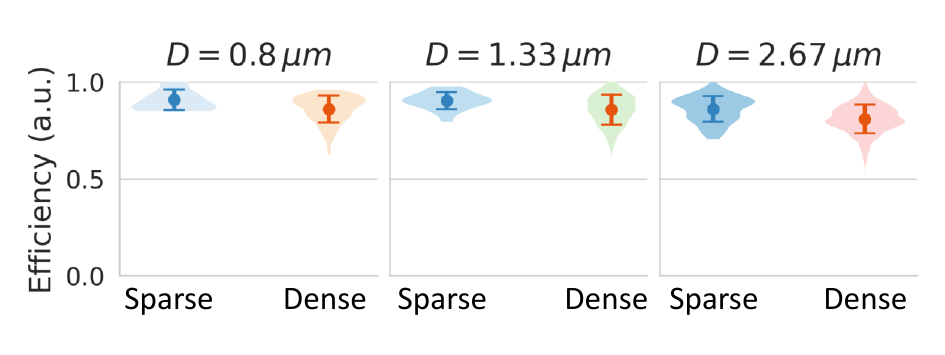}
    \caption{\textbf{Scalability to dense per-pixel target sampling.} Distribution comparisons of the focal intensity for sparse and dense target-space sampling under different target-region sizes $D$. The light violins show the distributions of the generated structures, while the markers and error bars indicate the mean and standard deviation, respectively. Even when the number of target focal points increases from sparse grids ($5\times5$, $6\times6$, and $11\times11$) to dense per-pixel sampling ($25\times25$, $41\times41$, and $81\times81$), the performance degradation remains moderate, demonstrating the strong scalability of DEN under a fixed training budget.}
    \label{Figure_compare_pix}
\end{figure}

To evaluate the scalability of DEN under increasingly dense target-space coverage, we compare sparse target sampling and dense per-pixel target sampling using the same training budget. For each target-region size $D$, the sparse and dense configurations are trained using the same number of optimization iterations as those reported in Fig.~\ref{Figure_compare_adjoi}(c) of the main text. Consequently, any performance differences arise solely from the increased density of target sampling rather than additional computational resources.

The quantitative results are summarized in Table~\ref{table_efficiency_binary}. As the target discretization becomes progressively denser, the number of target focal points increases dramatically while the accessible focal region remains unchanged. For example, for the largest target region ($D=2.67~\mu\mathrm{m}$), the number of focal positions increases from $121$ ($11\times11$) to $6561$ ($81\times81$), corresponding to more than a fifty-fold increase in target-space coverage. Despite this substantial increase in the number of target focal positions, the average focal intensity decreases only from $0.86\pm0.06$ to $0.81\pm0.07$, corresponding to a reduction of approximately $6\%$.

A similar trend is observed for the smaller focal regions. For $D=0.80~\mu\mathrm{m}$, increasing the number of target focal points from $25$ to $625$ reduces the average focal intensity from $0.91\pm0.05$ to $0.86\pm0.07$. For $D=1.33~\mu\mathrm{m}$, increasing the target count from $36$ to $1681$ decreases the average focal intensity from $0.90\pm0.04$ to $0.86\pm0.08$. In all cases, the performance degradation remains modest relative to the large increase in the number of target focal positions represented by the model.

Importantly, the degree of binarization remains consistently high across all experiments. Even under dense per-pixel target sampling, the generated structures maintain binarization levels of approximately $0.95$-$0.97$, indicating that the increased target-space coverage does not compromise fabrication compatibility. These results suggest that DEN can simultaneously maintain optical performance and manufacturability even when required to represent a much denser set of target conditions.

To further visualize this behavior, Fig.~\ref{Figure_compare_pix} shows the distributions of focal intensities for both sparse and dense target sampling. The violin plots illustrate the performance distributions across all generated structures, while the markers and error bars denote the corresponding means and standard deviations. For all three target-region sizes, the sparse and dense distributions exhibit substantial overlap and only a modest shift in their mean values. The overall distribution shapes remain similar despite the large increase in the number of target focal positions.

These observations indicate that DEN scales favorably with target-space coverage. For the largest focal region, the number of target focal positions increases by more than fifty times, while the average focal intensity decreases by only approximately $6\%$. Similar behavior is observed for the other focal-region sizes. This behavior highlights a key advantage of the proposed framework: a single network can efficiently represent and generate a very large number of target-dependent designs while maintaining high performance under a fixed training budget. The results further suggest that the learned target-conditioned mapping generalizes effectively across densely sampled target spaces rather than relying on a sparse set of isolated design solutions.

\section{Additional analysis of dimension expansion}
\label{sec::addana}

\begin{figure}[t]
    \centering
    \includegraphics[width=\linewidth]{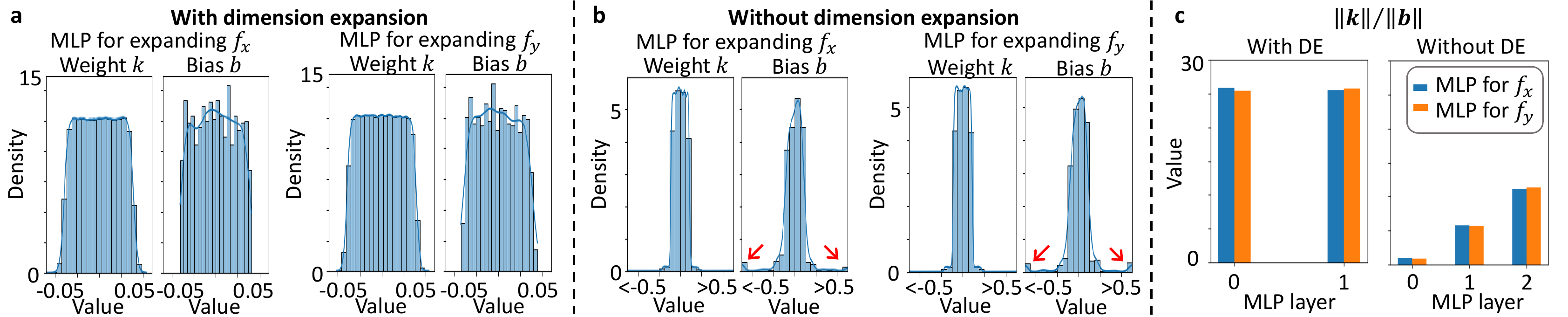}
    \caption{
    \textbf{Additional analysis of the projection MLPs with and without dimension expansion (DE).} \textbf{(a)} Distributions of MLP weights $k$ and biases $b$ with DE. The weights and biases remain well balanced across different layers. \textbf{(b)} Distributions of MLP weights and biases without DE. The bias distributions exhibit long tails (red arrows), indicating increased reliance on large biases rather than informative target-dependent features. \textbf{(c)} Comparison of the weight-to-bias ratio $||k||/||b||$ across different MLP layers. Networks with DE maintain substantially larger weight-to-bias ratios, suggesting stronger feature-dependent representations and reduced mode-collapse behavior.
    }
    \label{Figure_explain_exten}
\end{figure}

To further investigate the role of dimension expansion, we analyze the parameter distributions of the projection MLPs used to process the target coordinates. Figure~\ref{Figure_explain_exten} compares the learned weights and biases of the projection MLPs with and without dimension expansion (DE). These analyses provide additional insight into how DE influences the learned representations and why it improves the performance of the inverse-design network.

Figure~\ref{Figure_explain_exten}(a) shows the parameter distributions when DE is employed. For both projection MLPs, the weight and bias distributions remain relatively balanced and broadly distributed around zero. The learned parameters make effective use of both weights and biases, indicating that the output representations are primarily determined by the input-dependent transformations learned by the network. As a result, different target coordinates are mapped to distinct high-dimensional representations, providing rich conditioning information for the subsequent inverse-design network.

In contrast, Fig.~\ref{Figure_explain_exten}(b) shows the parameter distributions obtained without DE. While the weight distributions remain relatively concentrated, the bias distributions develop pronounced long tails, as indicated by the red arrows. This behavior suggests that the network increasingly relies on large bias terms rather than input-dependent feature transformations. Consequently, the generated representations become less sensitive to changes in the target coordinates, making it more difficult for the network to distinguish between nearby targets.

To quantify this effect, Fig.~\ref{Figure_explain_exten}(c) compares the ratio
\begin{equation}
    \frac{|k|}{|b|},
\end{equation}
where ($|k|$) and ($|b|$) denote the norms of the weights and biases, respectively. With DE, the ratio remains consistently large across different layers, indicating that the learned representations are dominated by feature-dependent transformations. Without DE, however, the ratio decreases substantially, reflecting the increasing influence of bias terms. This trend is consistent for both projection MLPs corresponding to ($f_x$) and ($f_y$).

These observations provide additional evidence that dimension expansion improves the conditioning of the inverse design problem. By transforming low-dimensional target coordinates into richer high-dimensional representations, DE encourages the network to rely on informative target-dependent features rather than bias-dominated mappings. This behavior improves representation diversity, enhances sensitivity to target variations, and helps prevent the mode-collapse phenomenon observed in the main text.

\section{Mathematical analysis of dimension expansion}
\label{sec::matheanaly}

In this section, we analyze the Fourier encoding and outer-product lifting used in the proposed dimension expansion framework, and investigate how they influence the geometry of the conditioning representations. The resulting analysis provides a theoretical perspective on the improved target separability, increased effective rank, and reduced mode-collapse behavior observed in Sec.~\ref{sec_analy_den}.
 
Let the target parameters be denoted by
\begin{equation}
\mathbf{T}=(x_1,x_2),
\end{equation}
where, following Sec.~\ref{sec::dimenexpan}, each coordinate is measured in
units of the target-grid index, so that $x\in[0,N_x]$ with $N_x$ the number of
discrete target samples along one dimension. Each coordinate is mapped into
the Fourier feature representation
\begin{equation}
I(x)
=
\left[
\sin(\omega_1x),\cos(\omega_1x),
\ldots,
\sin(\omega_{N_i}x),\cos(\omega_{N_i}x)
\right]^{\top}
\in\mathbb{R}^{2N_i},
\qquad
\omega_i=N_x^{-2i/N_i},
\label{eq:app_phi}
\end{equation}
so that the frequencies form a geometric ladder decreasing from
$\omega_{\max}=N_x^{-2/N_i}$ to $\omega_{\min}=N_x^{-2}$.
The encoded features are processed by two projection networks,
$\mathbf f_1=\mathrm{MLP}_1(I(x_1))$ and
$\mathbf f_2=\mathrm{MLP}_2(I(x_2))$, both with output dimension $2N_i$,
and combined through the outer product
$\mathbf g=\mathbf f_1\mathbf f_2^{\top}\in\mathbb R^{2N_i\times2N_i}$,
which serves as the conditioning representation supplied to the
inverse-design network.
 
\subsection{Metric and kernel structure induced by the Fourier encoding}
\label{sec::app_metric}
 
We first characterize the geometry of the encoding exactly, rather than only to first order. A direct computation using $ (\sin a-\sin b)^2+(\cos a-\cos b)^2 = 2-2\cos(a-b) $ gives, for any two targets $x$ and $x'$ with $\Delta=x-x'$, 
\begin{equation} \left\|I(x)-I(x')\right\|_2^{2} = \sum_{i=1}^{N_i} 2\bigl(1-\cos(\omega_i\Delta)\bigr) = 4\sum_{i=1}^{N_i} \sin^{2}\!\Bigl(\frac{\omega_i\Delta}{2}\Bigr) \eqqcolon \rho^{2}(\Delta), \label{eq:app_rho} \end{equation} 
and, equivalently, the inner product 
\begin{equation} k_I(\Delta) \coloneqq I(x)^{\top}I(x') = \sum_{i=1}^{N_i}\cos(\omega_i\Delta). \label{eq:app_kernel} \end{equation} 
The kernel is bounded as $|k_I(\Delta)|\le N_i$,
with $k_I(0)=N_i$. Two structural properties follow immediately. 
\emph{(i) Stationarity.} Both the distance $\rho(\Delta)$ and the kernel $k_I(\Delta)$ depend only on the coordinate difference $\Delta$, inducing a translation-invariant geometry in the encoded space. In particular, the local sensitivity 
\begin{equation} \left\| \frac{\partial I(x)}{\partial x} \right\|_2^{2} = \sum_{i=1}^{N_i}\omega_i^{2} \eqqcolon A^{2}, \label{eq:app_amp} \end{equation} 
is independent of $x$: the encoding amplifies all targets uniformly by the constant factor $A$, and for small perturbations $ \rho(\Delta)\approx A|\Delta|. $ 
\emph{(ii) Saturation.} A constant linear amplification alone would not explain the benefit of the encoding, since a fixed rescaling of the input can be absorbed into the first linear layer of any MLP. The essential property of Eq.~\ref{eq:app_rho} is its nonlinear shape: $\rho^{2}(\Delta)$ grows quadratically for $|\Delta|\lesssim\omega_{\max}^{-1}$, then saturates toward the ambient scale $\rho^{2}\le4N_i$, while the kernel $k_I(\Delta)$ correspondingly decorrelates. The encoding therefore converts coordinate differences into a similarity structure with a tunable correlation length set by the frequency ladder $\{\omega_i\}$, a transformation that cannot, in general, be reproduced by a single linear reparameterization of the raw coordinates.
 
\subsection{Relation to spectral bias}
\label{sec::app_ntk}
 
The Fourier-induced kernel structure in Eq.~\ref{eq:app_kernel} provides a kernel-based interpretation of the empirical observations in Sec.~\ref{sec_analy_den}. In the wide-network limit, gradient-descent training of a sufficiently wide MLP can be approximated by its neural tangent kernel (NTK)~\cite{jacot2018ntk}. The components of the target function aligned with the leading kernel eigenfunctions are learned fastest, and MLPs operating on raw low-dimensional coordinates are known to exhibit spectral bias, learning low-frequency dependencies significantly faster than high-frequency ones~\cite{dimensmismat}. Tancik et al.~\cite{tancik2020fourier} showed that composing an MLP with Fourier feature mappings induces an approximately stationary kernel whose bandwidth is controlled by the Fourier spectrum. Consequently, the frequency ladder in Eq.~\ref{eq:app_phi} provides a multi-scale representation of the target coordinates, allowing the network to capture both smooth global trends and sharper target-dependent variations in the design manifold. Although this interpretation does not constitute a formal NTK analysis of the proposed architecture, it provides additional intuition for why Fourier-based conditioning can improve the learnability of target-dependent mappings.
 
\subsection{Injectivity and resolution of the encoding}
\label{sec::app_injective}
 
Because the components of $I$ are periodic, it must be verified that the encoding does not alias distinct targets onto the same representation. 
\begin{proposition}[Injectivity] \label{prop:injective} 
If $\omega_{\min}L<2\pi$, where $L$ is the length of the target domain, then $I$ is injective on that domain. In particular, for the frequency ladder in Eq.~\ref{eq:app_phi} defined on the domain $[0,N_x]$, injectivity holds since $ \omega_{\min}N_x = \frac{1}{N_x} < 2\pi $ for all $N_x\ge1$. 
\end{proposition}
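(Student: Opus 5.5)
The plan is to prove injectivity using only the lowest-frequency pair $(\sin(\omega_{\min}x),\cos(\omega_{\min}x))$, which already separates points whenever the domain is shorter than one full period. Suppose $x,x'$ lie in a target domain of length $L$ and $I(x)=I(x')$. Then in particular the two coordinates associated with $\omega_{\min}$ agree:
\[
\sin(\omega_{\min}x)=\sin(\omega_{\min}x'),\qquad
\cos(\omega_{\min}x)=\cos(\omega_{\min}x').
\]
The map $\theta\mapsto(\cos\theta,\sin\theta)$ identifies angles modulo $2\pi$. Hence $\omega_{\min}(x-x')=2\pi k$ for some integer $k$. Equivalently, Eq.~\ref{eq:app_rho} shows that $\rho^2(\Delta)=0$ forces every term $\sin^2(\omega_i\Delta/2)$ to vanish, in particular the $\omega_{\min}$ term.

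Since both points lie in the domain, $|x-x'|\le L$. This gives $2\pi|k|=\omega_{\min}|x-x'|\le\omega_{\min}L<2\pi$, so $|k|<1$. Therefore $k=0$ and $x=x'$, because $\omega_{\min}>0$. This proves the general claim.

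For the specialization, substitute the ladder of Eq.~\ref{eq:app_phi}. The smallest frequency is $\omega_{N_i}=N_x^{-2}$, and the domain $[0,N_x]$ has $L=N_x$. Then $\omega_{\min}L=N_x^{-1}\le 1<2\pi$ for every $N_x\ge1$, so the hypothesis holds. The only point that needs care is confirming that the minimum of the geometric ladder really is attained at $i=N_i$. This is true because $N_x^{-2i/N_i}$ is nonincreasing in $i$ when $N_x\ge1$; in the degenerate case $N_x=1$ all frequencies equal $1$ and the bound still holds.

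I do not expect a genuine obstacle. The step most deserving of care is the reduction from equality of the sine and cosine components to $\omega_{\min}\Delta\in2\pi\mathbb{Z}$: both components are needed, since sine equality alone would also admit the reflected solution $\theta'=\pi-\theta$. I would also remark that the inequality $\omega_{\min}L<2\pi$ is strict, because $|\Delta|=L$ is attainable at the endpoints.
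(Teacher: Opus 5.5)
Your proof is correct and follows the same route as the paper: isolate the lowest-frequency pair $(\sin(\omega_{\min}x),\cos(\omega_{\min}x))$, note that it is injective on any domain shorter than its period $2\pi/\omega_{\min}$, and conclude that the full encoding $I$ is injective. The paper states this in one line; your $k\in\mathbb{Z}$ argument ($\omega_{\min}(x-x')=2\pi k$ with $|x-x'|\le L$ forcing $k=0$) and your check that $\omega_{\min}=\omega_{N_i}=N_x^{-2}$ simply supply the details it leaves implicit.
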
 
\begin{proof} 
The pair $ \bigl( \sin(\omega_{\min}x), \cos(\omega_{\min}x) \bigr) $ traces the unit circle with period $ \frac{2\pi}{\omega_{\min}} > L, $ and is therefore injective on the domain. Since this pair already uniquely determines $x$ over the domain, injectivity of one sub-block implies injectivity of the full encoding $I$. 
\end{proof} 

At the opposite end of the frequency ladder, the highest frequency determines the \emph{resolution} of the encoding. For two adjacent targets separated by the minimal grid spacing $\delta$ ($\delta=1$ in index units, corresponding to the half-wavelength spacing of the densest sampling stage in Sec.~\ref{sec::tradetai}), 
\begin{equation} \rho^{2}(\delta) \ge 4\sin^{2}\!\Bigl( \frac{\omega_{\max}\delta}{2} \Bigr) >0. \label{eq:app_resolution} \end{equation} 
Since $ 0<\omega_{\max}\delta<2\pi, $ the lower bound in Eq.~\ref{eq:app_resolution} is strictly positive. Additional contributions from the remaining frequency components further increase the separation between neighboring targets. The frequency ladder therefore simultaneously provides global injectivity and local target discrimination. The lowest frequency component ensures that distinct targets remain identifiable over the entire target domain, while the highest frequencies increase the separation between neighboring targets. Combined with the local amplification property in Eq.~\ref{eq:app_amp}, this guarantees that neighboring targets remain separated by a nonzero distance in the encoded space. Consequently, the encoding preserves both large-scale target structure and fine-scale target variations, making it well suited for dense target-space conditioning.
 
\subsection{Bilinear lifting through the outer product}
\label{sec::app_outer}
 
We now analyze the outer-product construction $\mathbf g=\mathbf f_1\mathbf f_2^{\top}$, whose vectorization satisfies $ \mathrm{vec}(\mathbf g) = \mathbf f_2\otimes\mathbf f_1. $ One clarification is in order. Since $\mathbf g$ is a deterministic smooth function of two scalars, the intrinsic dimension of the conditioning manifold $ \{\mathbf g(\mathbf T)\} $ cannot exceed two, regardless of the lifting. What the outer product increases is the \emph{linear} dimensionality (effective rank) of the family of representations, which is precisely the quantity measured in Fig.~\ref{Figure_witho_dimexp}(f) and the quantity that determines how the downstream network can linearly separate target conditions in its early layers. 
\begin{proposition}[Multiplicative growth of the representation span] \label{prop:rank} 
Let the targets be sampled on a product grid $\mathcal X_1\times\mathcal X_2$, and let $ r_1 = \dim\, \mathrm{span} \{\mathbf f_1(x_1):x_1\in\mathcal X_1\} $ and $ r_2 = \dim\, \mathrm{span} \{\mathbf f_2(x_2):x_2\in\mathcal X_2\}. $ Then 
\begin{equation} 
\dim\, \mathrm{span} \Bigl\{ \mathrm{vec}\, \mathbf g(x_1,x_2) : (x_1,x_2)\in \mathcal X_1\times\mathcal X_2 \Bigr\} = r_1r_2. 
\end{equation} 
By contrast, the concatenated representation $[\mathbf f_1;\mathbf f_2]$ spans a subspace of dimension at most $r_1+r_2$. 
\end{proposition}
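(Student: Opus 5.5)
The proof reduces to a standard fact about Kronecker products of spans. Set $V_1=\mathrm{span}\{\mathbf f_1(x_1):x_1\in\mathcal X_1\}$ and $V_2=\mathrm{span}\{\mathbf f_2(x_2):x_2\in\mathcal X_2\}$, so $\dim V_1=r_1$ and $\dim V_2=r_2$. Let $W$ be the span of the vectorized outer products. Since $\mathrm{vec}(\mathbf f_1\mathbf f_2^{\top})=\mathbf f_2\otimes\mathbf f_1$, I will show that $W=V_2\otimes V_1\subseteq\mathbb R^{(2N_i)^2}$. The equality $\dim(V_2\otimes V_1)=r_1r_2$ then finishes the main claim.

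\emph{Inclusion $W\subseteq V_2\otimes V_1$.} Each generator $\mathbf f_2(x_2)\otimes\mathbf f_1(x_1)$ lies in $V_2\otimes V_1$ by definition.

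\emph{Reverse inclusion.} Choose $x_1^{(1)},\dots,x_1^{(r_1)}\in\mathcal X_1$ so that the vectors $\mathbf u_a=\mathbf f_1(x_1^{(a)})$ form a basis of $V_1$. This is possible because a spanning set always contains a basis. Likewise choose $x_2^{(b)}$ so that $\mathbf v_b=\mathbf f_2(x_2^{(b)})$ form a basis of $V_2$. The product-grid hypothesis is used exactly here: every pair $(x_1^{(a)},x_2^{(b)})$ lies in $\mathcal X_1\times\mathcal X_2$, so every $\mathbf v_b\otimes\mathbf u_a$ is one of the generators of $W$. Bilinearity of $\otimes$ shows that these $r_1r_2$ vectors span $V_2\otimes V_1$.

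\emph{Linear independence.} It remains to show the $r_1r_2$ vectors $\mathbf v_b\otimes\mathbf u_a$ are linearly independent. Suppose $\sum_{a,b}c_{ab}\,\mathbf v_b\otimes\mathbf u_a=0$. Un-vectorizing gives $U C V^{\top}=0$, where $U=[\mathbf u_1,\dots,\mathbf u_{r_1}]$ and $V=[\mathbf v_1,\dots,\mathbf v_{r_2}]$ have full column rank and $C=(c_{ab})$. Multiply on the left by a left inverse of $U$, for example $(U^{\top}U)^{-1}U^{\top}$, and on the right by a right inverse of $V^{\top}$. This yields $C=0$, so the vectors are independent and $\dim W=r_1r_2$.

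\emph{Concatenated representation.} Every $[\mathbf f_1(x_1);\mathbf f_2(x_2)]$ lies in $V_1\oplus V_2$, embedded as $V_1\times\{0\}+\{0\}\times V_2$. That subspace has dimension $r_1+r_2$, so the span of the concatenated representations has dimension at most $r_1+r_2$.

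The only step needing care is the independence argument, and the matrix-inverse trick handles it directly. The subtle point is the one already flagged above: without the product-grid assumption, the basis pairs $(x_1^{(a)},x_2^{(b)})$ might not all be sampled, and the equality could fail.
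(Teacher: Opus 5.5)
Your proof is correct and takes the same approach as the paper. You identify the span of the vectorized outer products with $\mathrm{span}\{\mathbf f_2\}\otimes\mathrm{span}\{\mathbf f_1\}$, of dimension $r_1r_2$, and bound the concatenation through $\mathrm{span}\{\mathbf f_1\}\oplus\mathrm{span}\{\mathbf f_2\}$; you also prove the standard tensor-product facts the paper only cites, namely choosing bases from the sampled vectors (where the product grid is used) and showing independence via $UCV^{\top}=0\Rightarrow C=0$.
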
 
\begin{proof} 
The span of the elementary tensors $ \{ \mathbf f_2(x_2)\otimes\mathbf f_1(x_1) \} $ over a product set equals the tensor product of the individual spans, $ \mathrm{span}\{\mathbf f_2\} \otimes \mathrm{span}\{\mathbf f_1\}, $ whose dimension is $r_1r_2$. The concatenation statement follows from $ [\mathbf f_1;\mathbf f_2] \in \mathrm{span}\{\mathbf f_1\} \oplus \mathrm{span}\{\mathbf f_2\}. $ 
\end{proof} 

The equality requires product-grid sampling and therefore applies to the metalens task. For tasks with functionally dependent target coordinates, such as the Y-splitter benchmark, only the upper bound $r_1r_2$ applies, since the target coordinates do not span a full Cartesian product grid. The lifting therefore expands the linear span of the representation family multiplicatively, whereas concatenation increases the span only additively. This provides a concrete representational advantage of the outer product beyond the Fourier encoding itself. Although the intrinsic dimension of the conditioning manifold remains two, the larger linear span can allow downstream layers to separate target
conditions more effectively, which is consistent with the increased effective rank observed experimentally in Fig.~\ref{Figure_witho_dimexp}(f). The outer product also couples the two coordinates in the induced metric. Using $ \|\mathbf a\mathbf b^{\top}\|_F = \|\mathbf a\|_2 \|\mathbf b\|_2 $ and the trace identity $ \mathrm{tr} \Bigl( (\mathbf f_1'\mathbf f_2'^{\top})^{\top} \mathbf f_1\mathbf f_2^{\top} \Bigr) = (\mathbf f_1^{\top}\mathbf f_1') (\mathbf f_2^{\top}\mathbf f_2'), $ we obtain 
\begin{equation} 
\bigl\| \mathbf g-\mathbf g' \bigr\|_F^2 = \|\mathbf f_1\|_2^2 \|\mathbf f_2\|_2^2 + \|\mathbf f_1'\|_2^2 \|\mathbf f_2'\|_2^2 - 2 (\mathbf f_1^{\top}\mathbf f_1') (\mathbf f_2^{\top}\mathbf f_2'). 
\label{eq:app_sep_general} 
\end{equation} 
For unit-normalized features, Eq.~\ref{eq:app_sep_general} reduces to 
\begin{equation} 
\bigl\| \mathbf g-\mathbf g' \bigr\|_F^2 = 2\bigl(1-c_1c_2\bigr), \qquad c_j \coloneqq \mathbf f_j^{\top}\mathbf f_j' \in[-1,1]. 
\label{eq:app_sep} 
\end{equation} 
For intuition, Eq.~\ref{eq:app_sep}
shows that, after normalization,
coincidence of the lifted representations
requires simultaneous agreement in both coordinate embeddings. More generally, Eq.~\ref{eq:app_sep_general} demonstrates that the geometry of the lifted representation depends multiplicatively on the similarities of the two coordinate embeddings. Moreover, the derivatives 
\begin{equation} 
\frac{\partial\mathbf g}{\partial x_1} = \Bigl( \frac{\partial\mathbf f_1}{\partial x_1} \Bigr) \mathbf f_2^{\top}, \qquad \frac{\partial\mathbf g}{\partial x_2} = \mathbf f_1 \Bigl( \frac{\partial\mathbf f_2}{\partial x_2} \Bigr)^{\!\top}, 
\end{equation} 
show that a perturbation of either coordinate is distributed across every entry of the conditioning matrix. Consequently, target variations are not confined to a small subset of input channels of the downstream network, but instead influence the conditioning representation globally.

\subsection{Gram-matrix structure and mode collapse}
\label{sec::app_collapse}

The preceding results combine into a quantitative account of the mode-collapse behavior observed in Fig.~\ref{Figure_witho_dimexp}. 

\paragraph{Gram-matrix structure and spectral properties.} Consider the Gram matrix of the conditioning representations over the training targets $\{\mathbf T_1,\ldots,\mathbf T_M\}$, 
\begin{equation} 
\bigl(K_{\mathbf g}\bigr)_{mn} = \Bigl\langle \mathrm{vec}\,\mathbf g(\mathbf T_m), \mathrm{vec}\,\mathbf g(\mathbf T_n) \Bigr\rangle = \bigl(K_1\bigr)_{mn} \bigl(K_2\bigr)_{mn}, 
\label{eq:app_gram} 
\end{equation} 
where $K_1$ and $K_2$ denote the Gram matrices induced by the $x$- and $y$-coordinate embeddings, respectively, so that the outer-product conditioning construction induces a Hadamard-factorized Gram matrix. Equivalently, 
\begin{equation} 
K_{\mathbf g} = K_1 \circ K_2, 
\end{equation} 
where $\circ$ denotes the Hadamard (elementwise) product. By the Schur product theorem, $K_{\mathbf g}$ remains positive semidefinite whenever $K_1$ and $K_2$ are positive semidefinite. Moreover, since $ K_1 \succeq \lambda_{\min}(K_1) I, $ the monotonicity of the Hadamard product with respect to the Loewner order implies $ K_1 \circ K_2 \succeq \lambda_{\min}(K_1)\,(I\circ K_2) = \lambda_{\min}(K_1)\, \mathrm{diag}\!\bigl((K_2)_{11},\ldots,(K_2)_{MM}\bigr). $ Therefore, 
\begin{equation} 
\lambda_{\min}(K_{\mathbf g}) \ge \lambda_{\min}(K_1)\, \min_n (K_2)_{nn}. 
\label{eq:app_schur} 
\end{equation} 
By symmetry, 
\begin{equation} 
\lambda_{\min}(K_{\mathbf g}) \ge \lambda_{\min}(K_2)\, \min_n (K_1)_{nn}. 
\end{equation} 
Hence, 
\begin{equation} 
\lambda_{\min}(K_{\mathbf g}) \ge \max\!\Bigl( \lambda_{\min}(K_1)\min_n (K_2)_{nn}, \; \lambda_{\min}(K_2)\min_n (K_1)_{nn} \Bigr). 
\label{eq:app_schur_sym} 
\end{equation} 
Since $(K_1)_{nn} = \|\mathbf f_1(\mathbf T_n)\|^2$, $(K_2)_{nn} = \|\mathbf f_2(\mathbf T_n)\|^2$, the lower bound remains positive provided that the feature norms do not collapse to zero. Thus, provided that the feature norms remain bounded away from zero, the lifted representation preserves a positive lower bound on its smallest eigenvalue while remaining positive semidefinite~\cite{roger1994topics}. 

\begin{remark} 
On a full product grid, $K_1$ and $K_2$ are necessarily singular because coordinate values repeat across targets, so the lower bound in Eq.~\ref{eq:app_schur_sym} may become vacuous. The bound is more informative when applied to subsets of targets with pairwise-distinct coordinates, for which it guarantees that the outer-product lifting preserves the nondegeneracy of the per-coordinate Gram matrices. In particular, Eq.~\ref{eq:app_schur_sym} provides only a one-sided guarantee. It certifies nondegeneracy of $K_{\mathbf g}$ when sufficient nondegeneracy is already present in the per-coordinate Gram matrices, but does not fully characterize the rank of $K_{\mathbf g}$. Indeed, the outer-product lifting can substantially increase the span of the representation family, with \[ \mathrm{rank}(K_{\mathbf g}) \le \mathrm{rank}(K_1)\, \mathrm{rank}(K_2), \] and Proposition~\ref{prop:rank} shows that this upper bound can be attained on a product grid. 
\end{remark} 

\paragraph{Correlation suppression and mode-collapse mitigation.}

The off-diagonal entries of the lifted Gram matrix factor multiplicatively. For two distinct targets $m\neq n$,
\begin{equation}
(K_{\mathbf g})_{mn} = (K_1)_{mn}(K_2)_{mn}.
\label{equa_app_corr}
\end{equation}
The similarity between two target conditions is therefore gated jointly by the two coordinate embeddings: it depends jointly on the similarities in both coordinates, and it vanishes whenever either embedding renders the pair orthogonal. This contrasts with additive constructions such as concatenation, where similarity contributions from different coordinates are accumulated rather than coupled multiplicatively. Equation~\ref{equa_app_corr} thus shows that the outer-product lifting makes the conditioning representation more selective in distinguishing target pairs, complementing the rank growth of Proposition~\ref{prop:rank}.

This selectivity is most relevant in the regime where mode collapse arises. When the network is conditioned directly on the raw coordinates, nearby targets produce nearly identical representations; the corresponding rows of the Gram matrix become similar, so the per-target gradient contributions align and training is effectively driven by a shared objective across neighboring targets, encouraging a common structure and reducing output diversity. Dimension expansion counteracts this in two complementary ways: the Fourier encoding separates nearby targets through the stationary multi-scale kernel of Sec.~\ref{sec::app_metric}, and the outer-product lifting requires agreement in both coordinates for two conditioning matrices to coincide, as reflected in Eq.~\ref{eq:app_sep_general}. We emphasize that these arguments identify mechanisms rather than guarantees; together they offer a plausible account of the larger amplification factors, higher effective ranks, and reduced mode-collapse behavior observed in Fig.~\ref{Figure_witho_dimexp}(e,f).

\end{document}

%% file: math_commands.tex
\usepackage{amsmath,amsfonts,bm}

\def\eqref#1{equation~\ref{#1}}

\def\1{\bm{1}}

\DeclareMathAlphabet{\mathsfit}{\encodingdefault}{\sfdefault}{m}{sl}
\SetMathAlphabet{\mathsfit}{bold}{\encodingdefault}{\sfdefault}{bx}{n}

